\documentclass[smallextended]{svjour3}       

\usepackage[top=2cm, bottom=2.5cm, left=2cm, right=2cm]{geometry} 
\usepackage[justification=centering]{caption}
\usepackage{color}

\usepackage{amsmath} 
\usepackage{proof} 
\usepackage{mathrsfs} 

\usepackage[ruled,linesnumbered]{algorithm2e}
\usepackage{arydshln} 

\usepackage[colorlinks=true,linkcolor=blue,citecolor=blue,pdfborder={0 0 0}]{hyperref}             

\usepackage{graphicx}     
\usepackage{subcaption}   
\graphicspath{{image/}}   
\usepackage{float}        

\usepackage{multirow}     

\usepackage{natbib}
\usepackage{booktabs}     

\usepackage{pythonhighlight} 
\begin{document}

\title{Confounding  analysis of  $s$-level designs with multi-block variables}
\subtitle{}
\author{Wenbo Hu        \and    Zhiming Li*  }

\institute{Wenbo Hu,
              \email{huwenbo@stu.xju.edu.cn}           
           \\
           Zhiming Li, 
              \email{zmli@xju.edu.cn}  \\         
           College of Mathematics and System Science, Xinjiang University,  Urumqi 830046, China
}

\date{Received: date / Accepted: date}

\maketitle
\begin{abstract}
In practical experiments, block variables often arise from multiple sources of heterogeneity. To address the confounding problem, this paper proposes a blocked aliased component-number pattern (B${}^2$-ACNP) to analyze the confounding properties of $s$-level designs with multi-block variables.  We calculate the values of   (B${}^2$-ACNP) via a blocked wordlength distribution matrix. The classification patterns of existing criteria can be expressed as functions of specific elements within the  B${}^2$-ACNP, thereby establishing connections within a unified framework. Further, we provide confounding algorithms and visualization methods of the B${}^2$-ACNP. Finally, case analysis clarifies the significant role of the B${}^2$-ACNP.  The Python code is available in the Appendix.
	
\keywords{ Blocked design  \and Blocked aliased component-number pattern \and Blocked wordlength distribution matrix}
\end{abstract}

\section{Introduction}\label{sec1} 
In fractional factorial (FF) designs, large-scale experimental units often exhibit heterogeneity. Blocking experimental units can effectively control the heterogeneity.  Bisgaard (\citeyear{bisgaard_note_1994}) classified block problems in FF designs into two types: single-block and multi-block variables. For the former, scholars have developed various criteria to select optimal designs. The widely used one is the blocked minimum aberration (MA-type) criterion via the wordlength pattern of the treatment and block defining contrast subgroup (Sitter et al., \citeyear{sitter_fractional_1997}; Chen and Cheng, \citeyear{chen1999theory}; Zhang and Park, \citeyear{zhang2000optimal}; Cheng and Wu, \citeyear{cheng2002choice}). However, the criterion cannot distinguish between blocked designs with the same wordlength pattern. An alternative method is to choose optimal blocked designs by maximizing the number of clear main effects and clear two-factor interaction effects (2fis) (Chen et al., \citeyear{chen_results_2006}; Ai and He,  \citeyear{ai_efficient_2006}). The method is applicable when experimenters focus only on main effects and 2fis, and is invalid for blocked designs with no clear effects or all clear effects. In modeling,  the maximum estimation capacity (MEC) criterion compares the number of models with estimable main effects and important 2fis in blocked designs (Mukerjee and Cheng, \citeyear{mukerjee_blocked_2001}; Cheng and Mukerjee, \citeyear{cheng_regular-fractional_2003}). However, the MEC criterion also struggles to screen designs with high resolution, assuming that the remaining 2fis are negligible. Note that the classification patterns of these criteria are closely related to the confounding among the various effects. Zhang and Mukerjee (\citeyear{zhang_general_2009}) proposed a blocked aliased component-number pattern (B-ACNP) for $s$-level blocked designs, which accounts for confounding information more rigorously, and the aforementioned patterns can all be expressed as functions of some elements in the B-ACNP.  Under the B-ACNP, a general minimum lower-order confounding (B-GMC) criterion is considered and applied to prior information about the treatment factors. Until now, there have been many achievements in the construction of two-level B-GMC designs (Tan and Zhang, \citeyear{tan_construction_2013}; Zhao et al., \citeyear{zhao_note_2016}; Zhao and Zhao, \citeyear{zhao_results_2018}).  Wei et al. (\citeyear{wei_blocked_2014}) further refined the B-ACNP, called the B${}^1$-ACNP, and proposed the B${}^1$-GMC criterion for two-level blocked designs. The lower-confounding properties and B${}^1$-GMC construction of two-level blocked designs can be found in Zhao et al. (\citeyear{zhao_construction_2013}), Guo et al. (\citeyear{GUO201598}), Zhao et al. (\citeyear{zhao_theory_2016}),  Zhao and Sun (\citeyear{zhao_constructing_2017}), and Wang et al. (\citeyear{wang_blocked_2017}). Wang et al. (\citeyear{wang_three-level_2020}) extended the B${}^1$-ACNP to three-level cases, and Li et al. (\citeyear{li_results_2025}) further explored the construction of B${}^1$-GMC designs at three levels. More recently, Li and Li (\citeyear{li_general_2025}) generalized the B${}^1$-GMC criterion to \(s\)-level designs, significantly broadening its scope of application.

Previous studies have primarily focused on designs with a single block variable. However, in practical experiments, issues related to multi-block variables have attracted increasing attention. Bisgaard (\citeyear{bisgaard_note_1994}) pointed out that the soil may have a heterogeneous effect on both column and row variables simultaneously in a rectangular experimental field.  Williams et al. (\citeyear{williams_experimental_2021}) conducted an experiment on legume signal peptides that included seven block variables. This demonstrates that a systematic study of designs with multi-block variables holds significant theoretical importance and practical necessity. Since a single-block variable is a special case of a multi-block variable, the criteria for single-block variables often fail to select optimal designs for multi-block variables (Zhang et al., \citeyear{zhang_optimal_2011}; Zhao et al., \citeyear{zhao_construction_2017}). Although the MA, CE, and MEC criteria can be extended to the multi-block variable case (Zhao and Zhao, \citeyear{zhao_mixed-level_2019}, \citeyear{zhao_results_2019}; Zhao and Zhao, \citeyear{zhao_minimum_2021}), as in the single-block case, these criteria still suffer from analogous shortcomings in experiments with multi-block variables. For the two-level multi-block variable scenario, Zhang et al. (\citeyear{zhang_optimal_2011}) classified treatment effects into four categories:  \(g\)-class, \(b\)-class, \(m\)-class, and   \(\phi\)-class. Based on this, they proposed  \(\text{B}^2\text{-GMC}\) criterion via a four-category \(\text{B}^2\text{-AENP}\). The results show that the MA, CE, and MEC criteria can also be uniformly expressed by the \(\text{B}^2\text{-AENP}\). The properties of two-level \(\text{B}^2\text{-GMC}\) designs have been studied in Zhao et al. (\citeyear{zhao_construction_2017}) and Zhao (\citeyear{zhao_general_2021}).

Although the \(\text{B}^2\text{-AENP}\) and \(\text{B}^2\text{-GMC}\) criterion have made progress, there are still two aspects that need improvement: (a) most of the above research pays close attention to the two-level  \(\text{B}^2\text{-GMC}\) criterion. However, there are fewer research findings on higher-level  \(\text{B}^2\text{-AENP}\), which is the basis of the \(\text{B}^2\text{-GMC}\) criterion.   (b)  Existing literature on the calculation of the \(\text{B}^2\text{-AENP}\) is typically limited to main effects and two-factor interactions. When the resolution of a design with multi-block variables reaches VII, all its main effects and two- and three-factor interactions are clear. This necessitates paying attention to the potential influence of higher-order interaction effects. Moreover, aliasing analysis for higher-order effects is cumbersome and computationally intensive. Based on the above analysis, this paper extends the \(\text{B}^2\text{-AENP}\) to higher-level cases and calculates the \(\text{B}^2\text{-AENP}\) via a series of algorithms. The main contributions and novelties are summarized as follows:

(i) The \(\text{B}^2\text{-ACNP}\) is extended to \(s\)-level designs with multi-block variables, where \(s\) is a prime or a prime power, thereby broadening the scope of this criterion. Construction properties of the B\(^2\)-GMC designs are studied via complementary sets.

(ii) A new method is proposed for computing aliasing among effects of various orders using the blocked wordlength distribution matrix (B-WDM). Compared to the existing approach of Zhang and Mukerjee (\citeyear{zhang_general_2009}), the proposed method is more suitable for calculating aliasing relationships among high-order effects and for obtaining the complete \(\text{B}^2\text{-ACNP}\).

(iii) The relationships between the \(\text{B}^2\text{-ACNP}\) and other criterion classification patterns are analyzed. It is found that the MA, CE, and MEC criteria can all be derived from the \(\text{B}^2\text{-ACNP}\) via the B-WDM.

(iv) The algorithms and visualization are provided to calculate and illustrate the values of \(\text{B}^2\text{-ACNP}\), and the empirical analysis demonstrates the application. 

The remainder of this paper is organized as follows. Section \ref{sec2} presents the definitions of the \(\text{B}^2\text{-ACNP}\), the \(\text{B}^2\text{-GMC}\) criterion, and the B-WDM for \(s\)-level designs with multi-block variables, and discusses some properties of the aliasing components. In Section \ref{sec3}, we systematically explore the connections between the \(\text{B}^2\text{-ACNP}\) and the classification patterns of the MA, CE, and MEC criteria. Section \ref{sec4} provides a computer algorithm for the \(\text{B}^2\text{-ACNP}\) and a visual analysis of the aliasing patterns. Section \ref{sec5} demonstrates the practicality of the proposed criterion and algorithms through two illustrative examples. Finally, a brief conclusion and future research are given in Section \ref{sec6}. All algorithms in this paper are implemented in Python.
\section{Blocked aliased component-number pattern}\label{sec2}

Following the works of Zhang et al. (\citeyear{zhang_optimal_2011}) and Zhao et al. (\citeyear{zhao_construction_2017}, \citeyear{zhao_general_2021}), we first review some basic concepts and definitions of $s$-level blocked designs. Let \( q = n - m \) and \( N = (s^q - 1)/(s - 1) \). A saturated design \( H_{qs} \) with \( q \) independent columns at level \( s \) can be formed by \( N \) points from the finite projective geometry \( PG(q-1,s) \), and its structure can be represented recursively as follows:
\begin{align}\label{H_{qs}}
H_{1s}=\{1\}, H_{rs}=\left\{H_{(r - 1)s}, r, (r, r^2, \ldots, r^{s - 1})H_{(r - 1)s}\right\}, \quad r= 2,\ldots, q. 
\end{align} 
We refer to (\ref{H_{qs}}) as the $r$th-order Yates order at level \( s \). Denote \(S_{qr} = H_{qs} \setminus H_{rs}\) and \(F_{qq} = H_q\setminus H_{q-1}\). In an experiment with \( p \) different sources of heterogeneity (i.e., \( p \) block variables), if the \( j \)th block variable divides the experimental units into \( s^{i_j} \) blocks, then \( i_j \) independent factors need to be assigned to represent the block effect. Thus, describing the entire block structure requires a total of \( \sum_{j=1}^p i_j \) factors, and these factors are not required to be mutually independent, which differs from single block variable designs. For simplicity, set \( i_j = 1 \) for \( j = 1, 2, \ldots, p \).

Let \( D = (D_t : D_b) \) denote an \( s^{n-m} : s^p \) blocked FF design with \( n \) treatment factors and \( p \) block factors at \( s \) levels, where \( s \) is a prime or prime power. Here, \( D_t \) denotes the unblocked \( s^{n-m} \) design determined by \( m \) independent treatment defining words, and \( D_b \) is the blocking scheme generated by \( p \) independent block defining words. A design \( D \) is said to have resolution \( R \) if no effect involving \( c \) factors is aliased with any other effect involving fewer than \( R - c \) factors. Let \( F_1, F_2, \dots, F_n \) be the \( n \) points in \( D_t \). For any \( 1 \leq i_1 < i_2 < \dots < i_j \leq n \), \( F_{i_1} \times F_{i_2} \times \cdots \times F_{i_j} \) is called a \( j \)th-order factor interaction ($j$fi), and each $j$fi can be further partitioned into \( (s-1)^{j-1} \) mutually orthogonal \( j \)th-order factor interaction components ($j$fic's). When \( j = 1 \), the treatment effect is called a treatment main effect. Similarly, the blocking scheme \( D_b \) also contains block main effects, and a \( j\)th-order block interaction ($j$bi) can be partitioned into \( j\)th-order block interaction components ($j$bic's). Any $j$fi in design \( D \) can be represented by an \( n \)-dimensional vector \( \mathbf{b} = (b_1, b_2, \ldots, b_n)^\top \) over the finite field \( GF(s) \). Two effects are considered identical if their corresponding vectors are proportional. Consequently, an \( s^{n-m}:s^p \) design contains a total of \( (s^n-1)/(s-1) \) treatment components and \( (s^p-1)/(s-1) \) block components. In designs with multi-block variables, it is typically assumed that interactions between treatment factors and block factors are negligible, and only block main effects and 2bic's are considered significant block components.

We extend the \(\text{B}^2\text{-ACNP}\) and \(\text{B}^2\text{-GMC}\) criterion to  \(s^{n-m}:s^p\) designs with multi-block variables, abbreviated as \(s^{n-m}:s^p\) MBV design. These criteria apply to any \(s\) level and require a resolution \(R \geq \text{III}\). Based on the aliasing relationships among component effects,  the \((s^n-1)/(s-1)\) treatment components are classified into four categories: 
 (i)    \(g\)-class: treatment components aliased with the overall mean \(I\), 
 (ii) \(b\)-class: treatment components aliased with block main effects or \(2\)bic's,  
 (iii) \(m\)-class: treatment components aliased with treatment main effects, and 
(iv) \(\phi\)-class: the remaining treatment components not covered above.
Within the four categories, a treatment component can be aliased only with other components in the same category. Under the orthogonal component system, any \(i\)fic can be aliased with other components at most \(K_j = \binom{n}{j}\) \(j\)fic's.
Let \(^{\#}{}_{i}^*C_j^{(k)}\) be the number of \(i\)fic's in the *-class, aliased with \(k\) \(j\)fic's, where \(*\in\{g,b,m,\phi\}\), \(i,j=0,1,\cdots,n\),  \(k=0,1,\cdots,K_j\). For each pair \((i,j)\), denote \(^{\#}{}_{i}^*C_j =( {^{\#}{}_{i}^*C_j^{(0)}}, {^{\#}{}_{i}^*C_j^{(1)}}, \cdots, {^{\#}{}_{i}^*C_j^{(K_j)}})\). For brevity, $h$ consecutive zeros in the elements of \({^{\#}{}_{i}^*C_j}\) are denoted as \(0^h\), and if all elements after a certain term are zero, the subsequent terms are omitted. The set
\begin{align}\label{B2ACNP}
{}_{\,}^{B}\!C=\{{^{\#}{}_{i}^gC_j},{^{\#}{}_{i}^bC_j},{^{\#}{}_{i}^mC_j},{^{\#}{}_{i}^\phi C_j},i,j=0,1,\cdots,n\}
\end{align}  
is called the blocked aliased component-number pattern (\(\text{B}^2\text{-ACNP}\)) of \(s^{n-m}:s^p\) design. It completely describes the aliasing information among various components in the blocked design \(D\).
\begin{remark}
The \(\text{B}^2\text{-ACNP}\) (\ref{B2ACNP}) differs from the \(\text{B}^1\text{-ACNP}\) in Li and Li (\citeyear{li_general_2025}), and the \(\text{B}^2\text{-AENP}\) in Zhang et al. (\citeyear{zhang_optimal_2011}). The \(\text{B}^2\text{-ACNP}\) is applicable to the MBV case, allows aliasing among block components, and only considers significant block components, i.e., block main effects and \(2\)bic's. The \(\text{B}^1\text{-ACNP}\) requires that block components are not aliased with each other and treats all block components as equally important, leading to differences in the \(b\)-class and \(\phi\)-class between the two. Furthermore, the \(\text{B}^2\text{-AENP}\) is only applicable to the two-level case.
\end{remark}

In what follows, we only consider block designs with resolution \(R \geq \text{III}\). Under this condition, there is no aliasing among the main effects in the \(m\)-class. If aliasing among block components is ignored in the study, then the aliasing situation in the \(b\)-class can be disregarded. Similarly, since the \(i\)fic's in the \(g\)-class can be expressed through the aliasing in the \(m\)-class (Zhang and Mukerjee \citeyear{zhang_general_2009}), they can also be neglected. Consequently, we only need to consider the aliasing structures in the \(m\)-class and the \(\phi\)-class. According to effect hierarchy principle, lower-order effects are more important than higher-order effects and effects of the same order are equally important, combined with the \(\text{B}^2\text{-ACNP}\), a \(s^{n-m}:s^p\) design is called a \(\text{B}^2\text{-GMC}\) design if it sequentially maximizes the components in the following sequence
\begin{align}\label{B2GMC}
{}_{\,}^{B}\!C(D)=({^{\#}{}_{1}^mC_2},{^{\#}{}_{2}^\phi C_2},{^{\#}{}_{1}^mC_3},{^{\#}{}_{2}^\phi C_3},{^{\#}{}_{3}^\phi C_2},{^{\#}{}_{3}^\phi C_3},\ldots).
\end{align}  

\begin{remark}
According to the classification of blocked designs in Bisgaard's (\citeyear{bisgaard_note_1994}), the case that ``block main effects and block interactions are equally significant, and any product of block letters is treated as a single letter" is referred to as $Kind~1$. Another case that satisfies the effect hierarchy principle for block effects, which states that ``lower-order block factors are more significant, the same order block factors are equally significant, and the \(\sum_{j=1}^p i_j\) factors are not required to be mutually independent" is referred to as $Kind~2$. Accordingly, it is clear that the B-GMC and \(\text{B}^1\text{-GMC}\) criteria, developed for single block variables, are discussed under the $Kind~1$ scenario, whereas the \(\text{B}^2\text{-GMC}\) criterion, intended for MBV, is addressed under the $Kind~2$.
\end{remark}

Next, we introduce notation to study the aliasing properties among components in MBV designs. Denote $\mathbf{S}=\{1,\cdots,s-1\}$. For a given design \(D\subset H_{qs}\) and \(\gamma \in H_{qs}\), define
$$
B_i(D,\gamma) = \# \left\{(d_1,\dots,d_i): d_1,d_2,\dots,d_i \in D, d_1(d_2)^{l_1}\cdots(d_i)^{l_{i-1}} = \gamma, l_1,\cdots,l_{i-1}\in \mathbf{S}\right\},
$$
which denotes the number of $i$fic's aliased with the component \(\gamma\) in the \(s\)-level design \(D\).  Here, \(\#\{A\}\) denotes the cardinality of set \(A\). In particular, for \(i = 2\), we have
\begin{align}\label{B_2(Q)}
B_2(D,\gamma) = \#\left\{(d_1,d_2,l) : d_1,d_2 \in D, \; d_1 d_2^l = \gamma, \; l \in \mathbf{S}\right\}.
\end{align} 
In experiments, the primary concern is usually the aliasing relationships among low-order components, such as main effects and $2$fic's. The aliasing analysis of third- and higher-order components is computationally more intensive; this section first employs equation (\ref{B_2(Q)}) to discuss the properties of aliasing among low-order components, while the analysis of higher-order components is handled by other methods.

Consider an MBV design \(D=(D_t:D_b)\) of the form \(s^{n-m}:s^p\), where \(D_t\) can be viewed as the unblocked \(s^{n-m}\) design consisting of selected columns from \(H_{qs}\). Let \(\overline{D}=H_{qs}\setminus D = H_{qs}\setminus (D_t \cup D_b)\). From Li et al. (2015), it follows that
\begin{equation}
\begin{aligned}\label{ACNP}
\begin{cases} 
{}_{\,1}^{\#}\!C_2^{(k)}(D_t)=\#\{\gamma:\gamma\in D_t,B_2(D_t,\gamma)=k \},\\
{}_{\,2}^{\#}\!C_2^{(k)}(D_t)=(k + 1)\#\{\gamma:\gamma\in H_{qs},B_2(D_t,\gamma)=k+1 \}.
\end{cases}
\end{aligned}
\end{equation}
Denote the set \(U = D_b \cup \overline{D}\), and \(U(D_b) = \left\{ \gamma \in H_{qs}: \gamma \in D_b \text{ or } \gamma = d_1 d_2^l~\text{with}~d_1, d_2 \in D_b, l \in  \mathbf{S} \right\}\). It is easy to see that the cardinalities of these two sets are \(\#\{U\} = (s^q - 1)/(s - 1) - n\) and \(\#\{U(D_b)\} \leq p + p(p-1)(s-1)/2\), where the equality holds for the latter if and only if the block components are mutually independent. 
Then, by the definition of the \(\text{B}^2\text{-ACNP}\), it yields that
\begin{equation}\label{b2two}
\begin{aligned}
\begin{cases} 
^{\#}{}_{1}^mC_2^{(k)}(D)=\#\{\gamma:\gamma\in D_t,B_2(D_t,\gamma)=k \},\\
^{\#}{}_{2}^b C_2^{(k)}(D)=(k + 1)\#\{\gamma:\gamma\in U(D_b),B_2(D_t,\gamma)=k+1 \},\\
^{\#}{}_{2}^\phi C_2^{(k)}(D)=(k + 1)\#\{\gamma:\gamma\in U\setminus U(D_b),B_2(D_t,\gamma)=k+1 \}.
\end{cases}
\end{aligned}
\end{equation}
To discuss the construction problem of \(\text{B}^2\text{-GMC}\) designs, we analyze the relationship between the \(\text{B}^2\text{-ACNP}\) of design \(D\) and the ACNP of design \(D_t\).

\begin{theorem}\label{th6}
	For any \(s^{n-m}:s^p\) MBV design \(D=(D_t:D_b)\), we have
		  $${^{\#}{}_{1}^mC_2^{(k)}(D)}={{}_{1}^{\#}C_2^{(k)}(D_t)}, \quad {^{\#}{}_{2}^\phi C_2^{(k)}(D)}={{}_{2}^{\#}C_2^{(k)}(D_t)}-(k+1){{}_{1}^{\#}C_2^{(k+1)}(D_t)-{^{\#}{}_{2}^bC_2^{(k)}(D)} }.$$
\end{theorem}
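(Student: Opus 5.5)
The plan is to compare the two counting formulas (\ref{ACNP}) and (\ref{b2two}) directly. Both express the relevant quantities through the same function $B_2(D_t,\gamma)$, so the whole argument reduces to splitting $H_{qs}$ into disjoint pieces.

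The first identity is immediate. The first line of (\ref{b2two}) and the first line of (\ref{ACNP}) define ${^{\#}{}_{1}^mC_2^{(k)}(D)}$ and ${}_{\,1}^{\#}\!C_2^{(k)}(D_t)$ by the same set $\{\gamma\in D_t: B_2(D_t,\gamma)=k\}$. Hence they coincide for every $k$.

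For the second identity, I will use the disjoint decomposition
\[
H_{qs}=D_t\,\cup\,U(D_b)\,\cup\,\bigl(U\setminus U(D_b)\bigr).
\]
This requires two facts. First, $U=D_b\cup\overline{D}=H_{qs}\setminus D_t$, which follows from $D_t\cap D_b=\emptyset$. Second, $U(D_b)\subseteq U$, i.e.\ no element of $U(D_b)$ lies in $D_t$.

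The second fact is the one point needing care. If some $\gamma\in U(D_b)$ were equal to a treatment column, then a treatment main effect would be aliased with a block main effect or a 2bic. I expect to rule this out from the standing assumptions: resolution $R\geq$ III, together with the convention that treatment components aliased with main effects are placed in the $m$-class, which is kept disjoint from the $b$-class. If the paper's conventions do not exclude this case, I would instead interpret $U(D_b)$ as $U(D_b)\cap U$. That is exactly what the formula for ${^{\#}{}_{2}^bC_2^{(k)}}$ implicitly counts, since $2$fic's aliased with main effects belong to the $m$-class. I regard this as the main (mild) obstacle.

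Given the decomposition, take the second line of (\ref{ACNP}) and split the count over $H_{qs}$ into the three disjoint pieces:
\[
{}_{\,2}^{\#}\!C_2^{(k)}(D_t)=(k+1)\Bigl[\#\{\gamma\in D_t:B_2(D_t,\gamma)=k+1\}+\#\{\gamma\in U(D_b):B_2(D_t,\gamma)=k+1\}+\#\{\gamma\in U\setminus U(D_b):B_2(D_t,\gamma)=k+1\}\Bigr].
\]
Now identify each piece:
\begin{itemize}
\item the first piece times $(k+1)$ is $(k+1)\,{}_{\,1}^{\#}\!C_2^{(k+1)}(D_t)$, by (\ref{ACNP}) with $k$ replaced by $k+1$;
\item the second piece times $(k+1)$ is ${^{\#}{}_{2}^bC_2^{(k)}(D)}$;
\item the third piece times $(k+1)$ is ${^{\#}{}_{2}^\phi C_2^{(k)}(D)}$, both by (\ref{b2two}).
\end{itemize}
Rearranging gives the claimed formula for ${^{\#}{}_{2}^\phi C_2^{(k)}(D)}$.

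I would close with a short justification of the factor $(k+1)$ in (\ref{ACNP}) and (\ref{b2two}). A column $\gamma$ with $B_2(D_t,\gamma)=k+1$ carries $k+1$ mutually aliased $2$fic's, and each of them is aliased with exactly $k$ others. This confirms that the three pieces are counted with the same weight, so the splitting step is legitimate.
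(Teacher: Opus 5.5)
Your proposal is correct and follows the paper's proof essentially verbatim: the first identity is read off from the matching first lines of (\ref{ACNP}) and (\ref{b2two}), and the second comes from splitting the count in the second line of (\ref{ACNP}) over $H_{qs}=D_t\cup U(D_b)\cup(U\setminus U(D_b))$ and rearranging. The one thing you add is an explicit check that this really is a disjoint decomposition, i.e.\ $D_t\cap D_b=\emptyset$ and $U(D_b)\cap D_t=\emptyset$ (or else reading $U(D_b)$ as $U(D_b)\cap U$); the paper simply asserts the decomposition, and your care is warranted, since otherwise any $\gamma\in D_t\cap U(D_b)$ would be counted twice.
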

\begin{proof}
	Based on the equations (\ref{ACNP}) and (\ref{b2two}), it is easy to get \({^{\#}{}_{1}^mC_2^{(k)}(D)}={{}_{1}^{\#}C_2^{(k)}(D_t)}\).  
	Since \(H_{qs} = D_t \cup U(D_b) \cup (U\backslash U(D_b))\), combining equations (\ref{ACNP}) and (\ref{b2two})  yields
	\begin{equation}
	\begin{aligned}
	{}_{\,2}^{\#}\!C_2^{(k)}(D_t)=&(k + 1)\#\{\gamma:\gamma\in D_t \cup U(D_b) \cup U\backslash U(D_b),B_2(D_t,\gamma)=k+1 \}\\
	=&(k + 1)\#\{\gamma:\gamma\in D_t ,B_2(D_t,\gamma)=k+1 \} + (k + 1)\#\{\gamma:\gamma\in U(D_b) ,B_2(D_t,\gamma)=k+1 \} \\
	&+ (k + 1)\#\{\gamma:\gamma\in U\backslash U(D_b),B_2(D_t,\gamma)=k+1 \}\\
	=&(k+1){{}_{1}^{\#}C_2^{(k+1)}(D_t)}+{^{\#}{}_{2}^bC_2^{(k)}(D)}+{^{\#}{}_{2}^\phi C_2^{(k)}(D)}.\nonumber
	\end{aligned}
	\end{equation}
	Rearranging the terms proves the second equation.\hfill $\qed$
\end{proof}

Through Theorem \ref{th6} and equation (\ref{B2GMC}), an \(s^{n-m}:s^p\) MBV design \(D=(D_t:D_b)\) is a \(\text{B}^2\text{-GMC}\) design only if its corresponding non-blocked \(s^{n-m}\) design \(D_t\) satisfies sequentially maximizing \({{}_{1}^{\#}C_2(D_t)}\). After determining \({{}_{1}^{\#}C_2(D_t)}\), sequentially maximizing \({^{\#}{}_{2}^\phi C_2(D)}\) is equivalent to sequentially maximizing \({{}_{2}^{\#}C_2(D_t)}-{^{\#}{}_{2}^bC_2(D)} \).  Thus, we directly get the following result.

\begin{theorem}\label{th7}
	When only treatment main effects and 2fic's are considered, an \(s^{n-m}:s^p\) MBV design \(D=(D_t:D_b)\) is a \(\text{B}^2\text{-GMC}\) design if and only if it sequentially maximizes the sequence \(({{}_{1}^{\#}C_2(D_t)}, {{}_{2}^{\#}C_2(D_t)}-{^{\#}{}_{2}^bC_2(D)} )\).
\end{theorem}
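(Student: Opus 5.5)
The plan is to read the B$^2$-GMC criterion (\ref{B2GMC}) restricted to main effects and 2fic's and rewrite it via Theorem \ref{th6}. When only main effects and 2fic's are considered, the sequence (\ref{B2GMC}) reduces to $({^{\#}{}_{1}^mC_2},{^{\#}{}_{2}^\phi C_2})$. So $D$ is B$^2$-GMC if and only if, among all $s^{n-m}:s^p$ MBV designs, it first maximizes ${^{\#}{}_{1}^mC_2(D)}$ componentwise in the sequential (lexicographic) order $k=0,1,\ldots$. Then, among the designs attaining that maximum, it sequentially maximizes ${^{\#}{}_{2}^\phi C_2(D)}$.

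\textbf{Step 1: the first coordinate.} By the first identity of Theorem \ref{th6}, ${^{\#}{}_{1}^mC_2^{(k)}(D)}={}_{1}^{\#}C_2^{(k)}(D_t)$ for every $k$. Sequentially maximizing the first coordinate of (\ref{B2GMC}) is therefore literally the same as sequentially maximizing ${}_{1}^{\#}C_2(D_t)$. Moreover, two designs tie in the first coordinate exactly when their $D_t$'s have the same vector ${}_{1}^{\#}C_2(D_t)$.

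\textbf{Step 2: the second coordinate on the tie class.} Restrict to the class $\mathcal{D}^*$ of designs attaining the maximal ${}_{1}^{\#}C_2(D_t)$. On $\mathcal{D}^*$ the vector ${}_{1}^{\#}C_2(D_t)$ is a fixed constant, so the term $(k+1){}_{1}^{\#}C_2^{(k+1)}(D_t)$ in the second identity of Theorem \ref{th6} is independent of the design for every $k$. Thus, for each $k$,
\[
{^{\#}{}_{2}^\phi C_2^{(k)}(D)}=\Bigl({}_{2}^{\#}C_2^{(k)}(D_t)-{^{\#}{}_{2}^bC_2^{(k)}(D)}\Bigr)-c_k,
\]
where $c_k$ is the same for all $D\in\mathcal{D}^*$.

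Subtracting a fixed vector $(c_0,c_1,\ldots)$ preserves the lexicographic order: at the first index where two vectors differ, the difference is unchanged. Hence, within $\mathcal{D}^*$, sequentially maximizing ${^{\#}{}_{2}^\phi C_2(D)}$ is equivalent to sequentially maximizing ${}_{2}^{\#}C_2(D_t)-{^{\#}{}_{2}^bC_2(D)}$. Combining Steps 1 and 2 in both directions gives the ``if and only if''.

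\textbf{Main obstacle.} The only delicate point is Step 2. The shift $c_k$ must be genuinely constant across competitors, and the comparison must be made only after the first coordinate has been fixed. This is exactly where the sequential nature of the criterion is used. One should check that the induced ordering on the second coordinate is taken over the same set $\mathcal{D}^*$ in both formulations; this holds by Step 1.
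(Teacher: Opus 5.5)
Your proposal is correct and follows the same argument the paper gives just before stating the theorem. The first identity of Theorem \ref{th6} identifies the first coordinate with ${}_{1}^{\#}C_2(D_t)$. Once that vector is fixed, the shift $(k+1){}_{1}^{\#}C_2^{(k+1)}(D_t)$ in the second identity is design-independent, so sequentially maximizing ${^{\#}{}_{2}^\phi C_2(D)}$ is equivalent to sequentially maximizing ${}_{2}^{\#}C_2(D_t)-{^{\#}{}_{2}^bC_2(D)}$. Your version adds two details the paper leaves implicit: that a constant shift preserves the lexicographic order, and that the second comparison is restricted to the tie class.
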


When the number of factors \(n\) is moderate, the aliasing information among low-order components can be directly obtained from formulas (\ref{ACNP}) and (\ref{b2two}). Combined with Theorem 2 and a search procedure, a B\(^2\)-GMC design can be identified. However, when \(n\) is large, the computational complexity of the B\(^2\)-ACNP increases significantly. By exploiting the relationship between \(D_t\) and its complement \(U = H_{qs} \backslash D_t\), we propose an efficient method for computing low-order component aliasing in this setting.

\begin{lemma}\label{lem1}
	For an $s^{n-m}:s^p$ MBV design $D = (D_t, D_b)$, we have
	\[
	B_2(D_t,\gamma) = 
	\begin{cases}
	B_2(U,\gamma) + (s-1)(n-1) - (s^{n-m}-s)/2, & \gamma \in D_t,\\
	B_2(U,\gamma) + (s-1)n - (s^{n-m}-s)/2, & \gamma \in U.
	\end{cases}
	\]
\end{lemma}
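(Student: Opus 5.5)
The plan is a double-counting argument on the lines of $PG(q-1,s)$ through the point $\gamma$, with $q=n-m$. Throughout I read $B_2(D,\gamma)$ as the number of \emph{unordered} pairs $\{d_1,d_2\}\subset D\setminus\{\gamma\}$, $d_1\neq d_2$, for which some $l\in\mathbf{S}$ gives $d_1d_2^l=\gamma$ up to proportionality, i.e.\ the number of 2fic's aliased with $\gamma$. This reading is the one that reproduces the constant $(s^{n-m}-s)/2$. I also take $U=H_{qs}\setminus D_t$, as in the sentence preceding the lemma, so that $H_{qs}=D_t\,\dot\cup\,U$ and $\#\{D_t\}=n$.

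\emph{Step 1 (geometric facts).} Two distinct points $d_1,d_2$ span a line $\{d_1,\ d_1d_2^l\ (l\in\mathbf{S}),\ d_2\}$ of $s+1$ points. Hence $d_1d_2^l=\gamma$ for some $l$ exactly when $d_1,d_2,\gamma$ are distinct and collinear, and then $l$ is unique. The lines through a fixed $\gamma$ partition $H_{qs}\setminus\{\gamma\}$ into $(s^{q-1}-1)/(s-1)$ blocks of $s$ points each.

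\emph{Step 2 (total count).} Summing over lines through $\gamma$, the number of unordered pairs $\{d_1,d_2\}\subset H_{qs}$ aliased with $\gamma$ is
\[
T=\binom{s}{2}\frac{s^{q-1}-1}{s-1}=\frac{s^{q}-s}{2}.
\]
Splitting these pairs by location gives
\[
T=B_2(D_t,\gamma)+B_2(U,\gamma)+M(\gamma),
\]
where $M(\gamma)$ counts mixed pairs with one point in $D_t$ and one in $U$.

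\emph{Step 3 (eliminate $M$).} Count the pairs $(d_1,\{d_1,d_2\})$ with $d_1\in D_t\setminus\{\gamma\}$ and $\{d_1,d_2\}$ aliased with $\gamma$. For each such $d_1$ the admissible partners $d_2$ are exactly the $s-1$ remaining points of the line $\gamma d_1$, so the count is $(s-1)\,\#\{D_t\setminus\{\gamma\}\}$. The same count equals $2B_2(D_t,\gamma)+M(\gamma)$, because a pair inside $D_t$ has two choices of $d_1$ in $D_t$ while a mixed pair has one. Thus $M=(s-1)(n-1)-2B_2(D_t,\gamma)$ when $\gamma\in D_t$, and $M=(s-1)n-2B_2(D_t,\gamma)$ when $\gamma\in U$. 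Substituting into Step 2 gives
\[
T=B_2(U,\gamma)-B_2(D_t,\gamma)+(s-1)(n-1)\quad\text{or}\quad T=B_2(U,\gamma)-B_2(D_t,\gamma)+(s-1)n ,
\]
respectively, and rearranging yields both cases of Lemma~\ref{lem1}.

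The main obstacle is the bookkeeping in the definition of $B_2$ rather than the counting itself. One must confirm that $B_2$ counts unordered pairs, with $l$ determined by the pair, and that proportional vectors are identified with $\gamma$. Counting ordered triples $(d_1,d_2,l)$ literally would double every term, and the constant would no longer be $(s^{n-m}-s)/2$. I will state this convention explicitly at the start and check it on a small case, e.g.\ $s=3$, $q=2$, where each point lies on $2$ lines with $3$ other points each, giving $T=3=(9-3)/2$.
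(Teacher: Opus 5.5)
Your proof is correct and is essentially the paper's argument. You use the same three-way split of $B_2(H_{qs},\gamma)$ into pairs inside $D_t$, inside $U$, and mixed, and you eliminate the mixed term the same way, via $(s-1)\,\#\{D_t\setminus\{\gamma\}\}=2B_2(D_t,\gamma)+M(\gamma)$. The only differences are that you derive $B_2(H_{qs},\gamma)=(s^{q}-s)/2$ by counting lines through $\gamma$ rather than citing Lemma 1 of Li et al.\ (2016), and that you state explicitly the unordered-pair convention the paper leaves implicit.

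One small slip in your sanity check: for $s=3$, $q=2$, the point $\gamma$ lies on a single line containing the other $3$ points, not on $2$ lines. That single line is what gives $T=\binom{3}{2}=3$.
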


\begin{proof}
	By the definition of \(B_2(Q, \gamma)\), we have
	\[
	B_2(H_{(qs)},\gamma) = B_2(D_t,\gamma) + B_2(U,\gamma) + \#\bigl\{(d_1,d_2,l) : d_1 \in D_t, d_2 \in U, d_1 d_2^l = \gamma, l \in \mathbf{S}\bigr\}.
	\]
	When \(\gamma \in D_t\), we have \(\#\{D_t\} = n\). In the cross terms, the element \(d_1 \in D_t\) can be chosen from \(n-1\) possibilities, and \(l \in \mathbf{S}\) provides \((s-1)\) choices. Thus, the total number of possible choices for \(d_2 \in H_{qs}\) is \((s-1)(n-1)\), among which \(B_2(D_t,\gamma)\) pairs correspond to \(d_2 \in D_t\). Consequently, the number of cross terms is \((s-1)(n-1) - 2B_2(D_t,\gamma)\). Combining this with Lemma 1 of Li et al. (\citeyear{li_analysis_2016}), the original expression becomes
	\[
	B_2(D_t,\gamma) = 
	B_2(U,\gamma) + (s-1)(n-1) - (s^{n-m}-s)/2, \quad \gamma \in D_t.
	\]
	When \(\gamma \in U\), in the cross terms, the element \(d_1 \in D_t\) can be chosen from \(n\) possibilities, leading to \((s-1)n\) possible choices for \(d_2 \in H_{qs}\). Hence, the number of cross terms is \((s-1)n - 2B_2(D_t,\gamma)\), and we obtain
	\[
	B_2(D_t,\gamma) = 
	B_2(U,\gamma) + (s-1)n - (s^{n-m}-s)/2, \quad \gamma \in U.
	\]
	This completes the proof.\hfill $\qed$
\end{proof}

Based on Lemma \ref{lem1}, we can express \(B_2(D_t, \gamma)\) for an \(s\)-level design \(D\) in terms of \(B_2(U, \gamma)\), thereby computing \({^{\#}{}_{1}^m C_2^{(k)}}\) and \({^{\#}{}_{2}^\phi C_2^{(k)}}\) via a smaller \(\#\{U\}\) when \(\#\{D_t\}=n\) is large.

\begin{theorem}\label{th8}
	For an $s^{n-m}:s^p$ MBV design $D = (D_t, D_b)$, we have
	\begin{equation}\label{b2twoU}
	\begin{aligned}
	\begin{cases} 
	{^{\#}{}_{1}^m C_2^{(k)}(D)} = \#\bigl\{ \gamma : \gamma \in D_t, B_2(U,\gamma) = k - (s-1)(n-1) + (s^{n-m}-s)/2 \bigr\},\\
	{^{\#}{}_{2}^\phi C_2^{(k)}(D)} = (k+1)\#\bigl\{ \gamma : \gamma \in U \backslash U(D_b), B_2(U,\gamma) = k+1 - (s-1)n + (s^{n-m}-s)/2 \bigr\}.
	\end{cases}
	\end{aligned}
	\end{equation}
\end{theorem}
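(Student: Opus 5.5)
The plan is to substitute Lemma \ref{lem1} directly into the defining formulas (\ref{b2two}), using Theorem \ref{th6} only as a consistency check. Both identities in (\ref{b2twoU}) are obtained by rewriting the condition ``$B_2(D_t,\gamma)=\cdot$'' in (\ref{b2two}) as an equivalent condition on $B_2(U,\gamma)$, with $\gamma$ running over the same index set. Lemma \ref{lem1} supplies this translation, and it depends only on whether $\gamma\in D_t$ or $\gamma\in U$.

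For the first identity, I will start from ${^{\#}{}_{1}^mC_2^{(k)}(D)}=\#\{\gamma\in D_t: B_2(D_t,\gamma)=k\}$. Since every $\gamma$ in this set lies in $D_t$, the first branch of Lemma \ref{lem1} gives
\[
B_2(D_t,\gamma)=B_2(U,\gamma)+(s-1)(n-1)-(s^{n-m}-s)/2 .
\]
So $B_2(D_t,\gamma)=k$ holds exactly when $B_2(U,\gamma)=k-(s-1)(n-1)+(s^{n-m}-s)/2$. The two sets therefore coincide elementwise and have the same cardinality, which is the first formula.

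For the second identity, I will start from ${^{\#}{}_{2}^\phi C_2^{(k)}(D)}=(k+1)\#\{\gamma\in U\setminus U(D_b): B_2(D_t,\gamma)=k+1\}$. I first need $U\setminus U(D_b)\subseteq U$, which holds because $U(D_b)\subseteq U$ and $U\setminus U(D_b)$ is by definition a subset of $U$. Once this is checked, the second branch of Lemma \ref{lem1} applies to every $\gamma$ in the index set. It turns the condition $B_2(D_t,\gamma)=k+1$ into $B_2(U,\gamma)=k+1-(s-1)n+(s^{n-m}-s)/2$. The factor $(k+1)$ is unchanged, and this gives the second formula.

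The only point needing care is making sure each index set lies entirely in one branch of Lemma \ref{lem1}. There are two things to confirm. First, $D_t$ and $U$ partition $H_{qs}$, so no $\gamma$ falls under both branches. Second, $U(D_b)$ sits inside $U$ rather than meeting $D_t$; this uses $D_b\subseteq U$ together with the convention, taken from the decomposition $H_{qs}=D_t\cup U(D_b)\cup(U\setminus U(D_b))$ in the proof of Theorem \ref{th6}, that $U(D_b)$ is disjoint from $D_t$. Beyond these checks the argument is a direct substitution, and the constant $(s^{n-m}-s)/2$ is carried over from Lemma \ref{lem1} without recomputation.
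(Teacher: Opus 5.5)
Your proposal is correct and follows the same route as the paper, which gives no separate proof but obtains Theorem \ref{th8} by substituting the two branches of Lemma \ref{lem1} into the defining formulas (\ref{b2two}): the first branch on $D_t$, the second on $U\setminus U(D_b)$. Your appeal to $U(D_b)\cap D_t=\emptyset$ is superfluous here, because $U\setminus U(D_b)\subseteq U$ holds by the definition of set difference alone; that disjointness only matters for the partition used in the proof of Theorem \ref{th6}.
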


Obviously, when \(n\) is large, it is more convenient to calculate \({^{\#}{}_{1}^m C_2^{(k)}}\) and \({^{\#}{}_{2}^m C_2^{(k)}}\) using Theorem \ref{th8}. We illustrate the application of Theorem \ref{th8} with an example.

\begin{example}\label{exm1}
	Consider a \(3^{31-27}:3^2\) design \(D = (D_t, D_b)\), with
	$D_t=F_{44}\cup \{3,13,23,123 \}$, $D_b=H_{23}$,  and $U=H_{33}\backslash  \{3,13,23,123 \}$.
	Clearly, any factor in the set \(F_{44}\) of \(D_t\) is not aliased with any 2fic in \(U\). Each factor in the set \(\left\{3,13,23,123\right\}\) is aliased with six 2fic's in \(U\). Based on Theorem \ref{th8}, we have
	\[
	^{\#}{}_{1}^mC_2^{(k)}(D) =
	\begin{cases}
	27, & k=21, \\
	4, & k=27, \\
	0, & \text{otherwise},
	\end{cases}
	\quad
	^{\#}{}_{2}^\phi C_2^{(k)}(D) =
	\begin{cases}
	112, & k=27, \\
	29, & k=28, \\
	0, & \text{otherwise}.
	\end{cases}
	\]
\end{example}

Through Theorem \ref{th8}, for some specific designs, we can directly obtain the aliasing structure among their low-order components, which is helpful for the study of constructing \(\text{B}^2\text{-GMC}\) designs at \(s\) levels.

\begin{theorem}
	Let \(D = (D_t : D_b)\) be an \(s^{n-m}: s^p\) MBV design with \(q = n-m\). Suppose there exists an integer \(v\) with \(1 \leq v \leq q-1\) such that \(D_t = S_{qv}\) and \(D_b \subset H_v\).  Then,  
	$$  {^{\#}{}_{1}^m C_2^{(k)}(D)} = 
	\begin{cases}
	\displaystyle \frac{s^q - s^v}{s - 1}, & k = K_1, \\
	0, & \text{otherwise},\quad 
	\end{cases}
 {^{\#}{}_{2}^\phi C_2^{(k)}(D)} = 
	\begin{cases}
	\displaystyle \frac{s^q - s^v}{2} \left( \frac{s^v - 1}{s - 1} - M \right), & k = K_2 - 1, \\
	0, & \text{otherwise},
	\end{cases}
		$$
	where
	\(
	K_1 = (s^q - 2s^v - s + 2)/2,~K_2 = (s^q - s^v)/2,
	\) and \(M=\#\{U(D_b)\}\).
\end{theorem}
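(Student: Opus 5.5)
The plan is to apply Theorem \ref{th8} directly, since here the complement set is small and has a very rigid structure. First identify $U$. Since $D_t=S_{qv}=H_{qs}\setminus H_{vs}$, we get $\overline{D}\cup D_b = H_{qs}\setminus D_t = H_{vs}$. Using $D_b\subset H_v$, this gives $U=D_b\cup\overline{D}=H_{vs}$, a saturated design on $v$ independent columns. Also record the parameters: $n=\#\{S_{qv}\}=(s^q-s^v)/(s-1)$ and $s^{n-m}=s^q$, so every constant in Theorem \ref{th8} can be written in terms of $s,q,v$.

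The key step is computing $B_2(U,\gamma)=B_2(H_{vs},\gamma)$ for every $\gamma\in H_{qs}$. There are two cases.

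\emph{Case $\gamma\in D_t$.} The set $H_{vs}$ is a projective subgeometry $PG(v-1,s)$, so it is closed under the operation $d_1d_2^l$ up to proportionality. Hence no pair from $H_{vs}$ produces a $\gamma\notin H_{vs}$, and $B_2(U,\gamma)=0$. The first formula of Theorem \ref{th8} then says that every one of the $n=(s^q-s^v)/(s-1)$ main effects has
\[
k=(s-1)(n-1)-(s^q-s)/2 .
\]
Substituting $n$, this simplifies to $(s^q-2s^v-s+2)/2=K_1$, which gives the first pattern.

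\emph{Case $\gamma\in U=H_{vs}$.} Here I would invoke Lemma 1 of Li et al.\ (\citeyear{li_analysis_2016}), the same fact used in the proof of Lemma \ref{lem1}, applied to the saturated design $H_{vs}$ in place of $H_{qs}$. That lemma gives $B_2(H_{vs},\gamma)=(s^v-s)/2$ for every $\gamma\in H_{vs}$. Geometrically, each of the $(s^{v-1}-1)/(s-1)$ lines of $PG(v-1,s)$ through $\gamma$ contributes $s/2$ unordered $2$fic's in the paper's convention. The second formula of Theorem \ref{th8} then forces
\[
k+1=(s^v-s)/2+(s-1)n-(s^q-s)/2 = (s^q-s^v)/2 = K_2
\]
for every $\gamma\in U\setminus U(D_b)$.

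Finally, count the components. The set $U\setminus U(D_b)$ has $(s^v-1)/(s-1)-M$ elements, because $U(D_b)\subset H_{vs}$ by closure. Multiplying by the factor $k+1=K_2$ gives $\frac{s^q-s^v}{2}\bigl(\frac{s^v-1}{s-1}-M\bigr)$ at $k=K_2-1$, and $0$ for every other $k$.

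The main obstacle is making sure the count $B_2(H_{vs},\gamma)=(s^v-s)/2$ matches the paper's counting convention for $B_2$, which counts unordered $2$fic's and identifies proportional vectors. I would check this by consistency with the case $v=q$ of the constant $(s^{n-m}-s)/2$ in Lemma \ref{lem1}, and against Example \ref{exm1}. The arithmetic simplifications are routine.
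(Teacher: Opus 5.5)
Your proposal is correct and follows essentially the same route as the paper's proof: identify $U=H_{v}$, get $B_2(U,\gamma)=0$ for $\gamma\notin H_v$ by closure and $B_2(U,\gamma)=(s^v-s)/2$ for $\gamma\in H_v$ from Lemma 1 of Li et al., then substitute into Lemma \ref{lem1}/Theorem \ref{th8} with $n=(s^q-s^v)/(s-1)$. Your value of $n$ is the correct one; the paper's proof has a typo there. One small slip is in your geometric gloss: each of the $(s^{v-1}-1)/(s-1)$ lines through $\gamma$ contributes $\binom{s}{2}=s(s-1)/2$ unordered $2$fic's aliased with $\gamma$ (one for each pair of the other $s$ points on the line), not $s/2$, and with that correction the product is exactly $(s^v-s)/2$.
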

\begin{proof}
	Because \(H_v\) is a \(v\)-flat of \(H_q\), it is closed under the formation of two‑factor interactions. Combined with Lemma 1 of Li et al. (\citeyear{li_analysis_2016}), for any \(\gamma \in H_q\), we have
	\begin{equation}\label{th4.1}
	    B_2(H_v, \gamma) = 
	    \begin{cases}
	    \displaystyle (s^v - s)/2, & \gamma \in H_v, \\
	    0, & \gamma \notin H_v.
	    \end{cases}
	\end{equation}
	Note that \(U = H_q \backslash D_t = H_v\). Applying Lemma \ref{lem1} to the present case, we obtain for \(\gamma \in D_t\), $B_2(D_t, \gamma) = B_2(H_v, \gamma) + (s-1)(n-1) - (s^q - s)/2$.
	Since $\#\{U\}=(s^v-1)/(s-1)$, we have $n =(s^q-1)/(s-1)-(s^v-1)/(s-1)= (s^q - s^v)(s-1)$. Using Lemma \ref{lem1} with \(\gamma \notin H_v\) and substituting \(n = (s^q - s^v)(s-1)\), a straightforward algebraic simplification yields
	\begin{equation}\label{th4.2}
		B_2(D_t, \gamma) = (s^q - 2s^v - s + 2)/2 = K_1 \quad \text{for all } \gamma \in D_t.
	\end{equation}
	Similarly, for \(\gamma \in H_v\), Lemma \ref{lem1} gives
	$B_2(D_t, \gamma) = B_2(H_v, \gamma) + (s-1)n - (s^q - s)/2$. Inserting the non‑zero branch of (\ref{th4.1}) and the expression for \(n\), we obtain
	\begin{equation}\label{th4.3}
		B_2(D_t, \gamma) = \frac{s^q - s^v}{2} = K_2 \quad \text{for all } \gamma \in H_v.
	\end{equation}
	Since for all \(\gamma \in D_t\), \(B_2(D_t, \gamma)\) is constantly equal to the constant \(K_1\), this set is nonempty only when \(k = K_1\), and the size of the set is exactly equal to the number of elements in \(D_t\), i.e., \(n\). Then, combining with equation (\ref{b2twoU}), we obtain the expression of 
	${^{\#}{}_{1}^m C_2^{(k)}(D)}$.
	
	Note that \(U = H_v\) and \(U(D_b) \subset H_v\) by condition (\ref{th4.2}). The set \(U \backslash U(D_b)\) then has cardinality \((s^v - 1)/(s-1) - M\). For every \(\gamma\) in this set, (\ref{th4.3}) gives \(B_2(D_t, \gamma) = K_2\). Thus, the expression of 
	${^{\#}{}_{2}^\phi C_2^{(k)}(D)}$ follows.
	 \hfill $\qed$
\end{proof}

\begin{corollary}\label{co1}
	Let \(D = (D_t : D_b)\) be an \(s^q: s^p\) MBV design with \(D_t =S_{q(q-1)}=F_{qq}\) and \(D_b \subset H_{q-1}\). Then,
	\begin{align*}
	   {^{\#}{}_{1}^m C_2^{(k)}(D)} &= 
	\begin{cases}
	\displaystyle s^{q-1}, & k = {(s-2)(s^{q-1}-1)}/2, \\
	0, & \text{otherwise},
	\end{cases} \\
	{^{\#}{}_{2}^\phi C_2^{(k)}(D)} &= 
	\begin{cases}
	\displaystyle \frac{s^{q-1}(s-1)}{2} \left( \frac{s^v - 1}{s - 1} - M \right), & k = s^{q-1}(s-1)/2-1, \\
	0, & \text{otherwise}.
	\end{cases} 
	\end{align*}
\end{corollary}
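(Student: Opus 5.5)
The plan is to obtain Corollary~\ref{co1} as the special case \(v=q-1\) of the preceding theorem, so no new aliasing argument is needed; only the hypotheses must be checked and the constants simplified. First, verify the hypotheses: since \(S_{qr}=H_{qs}\setminus H_{rs}\) and \(F_{qq}=H_q\setminus H_{q-1}\), we have \(S_{q(q-1)}=F_{qq}\) by definition. Moreover \(v=q-1\) satisfies \(1\le v\le q-1\) (assuming \(q\ge 2\)), and \(D_b\subset H_{q-1}=H_v\) by assumption. Hence the preceding theorem applies with \(U=H_{q-1}\) and \(M=\#\{U(D_b)\}\).

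Next, substitute \(v=q-1\) into each quantity of that theorem. The count of main effects becomes \((s^q-s^{q-1})/(s-1)=s^{q-1}\), which is also \(n=\#\{F_{qq}\}\). The index \(K_1\) becomes
\[
K_1=\frac{s^q-2s^{q-1}-s+2}{2}=\frac{(s-2)s^{q-1}-(s-2)}{2}=\frac{(s-2)(s^{q-1}-1)}{2}.
\]
Similarly \(K_2=(s^q-s^{q-1})/2=s^{q-1}(s-1)/2\), so \(k=K_2-1=s^{q-1}(s-1)/2-1\). The prefactor \((s^q-s^v)/2\) in the \(\phi\)-class count becomes \(s^{q-1}(s-1)/2\). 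The factor \((s^v-1)/(s-1)-M\) is kept with \(v=q-1\); it is the cardinality of \(U\setminus U(D_b)=H_{q-1}\setminus U(D_b)\), which is well defined because \(U(D_b)\subset H_{q-1}\) (a flat is closed under forming \(d_1d_2^l\)).

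The only step needing care is the bookkeeping, not any new idea. I expect the main obstacle to be reconciling these substitutions with the intermediate identity used in the theorem's proof, where \(n\) should be \((s^q-s^v)/(s-1)\) rather than the \((s^q-s^v)(s-1)\) that appears there. As a sanity check, I would therefore recompute \(B_2(D_t,\gamma)\) directly from Lemma~\ref{lem1} with \(n=s^{q-1}\) and \(B_2(H_{q-1},\gamma)\) taken from (\ref{th4.1}):
\begin{itemize}
\item for \(\gamma\in F_{qq}\),
\[
B_2(D_t,\gamma)=0+(s-1)(s^{q-1}-1)-\frac{s^q-s}{2}=\frac{(s-2)(s^{q-1}-1)}{2};
\]
\item for \(\gamma\in H_{q-1}\),
\[
B_2(D_t,\gamma)=\frac{s^{q-1}-s}{2}+(s-1)s^{q-1}-\frac{s^q-s}{2}=\frac{s^{q-1}(s-1)}{2}.
\]
\end{itemize}
These agree with the substituted \(K_1\) and \(K_2\). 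Plugging them into Theorem~\ref{th8}, or equivalently into (\ref{b2two}), then gives both displayed distributions: all \(s^{q-1}\) main effects sit at \(k=K_1\), and the \(\phi\)-class contributes \(K_2\cdot\#\{U\setminus U(D_b)\}\) at \(k=K_2-1\).
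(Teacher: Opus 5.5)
Your proposal is correct and takes the same route as the paper, which states this corollary without a separate proof as the case $v=q-1$ of the preceding theorem (so $S_{q(q-1)}=F_{qq}$, $n=s^{q-1}$, $K_1=(s-2)(s^{q-1}-1)/2$, $K_2=s^{q-1}(s-1)/2$). Your direct recomputation of $B_2(D_t,\gamma)$ from Lemma~\ref{lem1} with $n=(s^q-s^v)/(s-1)$ rightly avoids the typo $n=(s^q-s^v)(s-1)$ in the theorem's proof, and you are right that the factor $(s^v-1)/(s-1)-M$ in the statement is to be read with $v=q-1$.
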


\begin{example}\label{exm2}
	(i) Consider a \(2^{16-11}: 2^3\) design \(D = (D_t: D_b)\) with \(D_t = {F_{55}}\) and \(D_b = H_{22} \subset H_{42}\). According to Corollary \ref{co1}, we have \(U(D_b)=H_{22}\), \(M=3\), and
	\[
	^{\#}{}_{1}^mC_2^{(k)}(D) =
	\begin{cases}
	16, & k=0, \\
	0, & \text{otherwise},
	\end{cases}
	\quad
	^{\#}{}_{2}^\phi C_2^{(k)}(D) =
	\begin{cases}
	96, & k=7, \\
	0, & \text{otherwise}.
	\end{cases}
	\]
	
	(ii) Consider a \(3^{27-23}: 3^3\) design \(D = (D_t: D_b)\) with \(D_t = {F_{44}}\) and \(D_b = \{1,2,3\} \subset H_{42}\). Following Corollary \ref{co1}, we have \(U(D_b)=\{1, 2, 3, 12, 12^2,13,13^2,23,23^2\}\), \(M=9\), and
	\[
	^{\#}{}_{1}^mC_2^{(k)}(D) =
	\begin{cases}
	27, & k=13, \\
	0, & \text{otherwise},
	\end{cases}
	\quad
	^{\#}{}_{2}^\phi C_2^{(k)}(D) =
	\begin{cases}
	108, & k=26, \\
	0, & \text{otherwise}.
	\end{cases}
	\]
\end{example}

The above analysis is efficient for investigating confounding among main effects and 2fic's.  As the order of components increases, the computational complexity of aliasing among components increases significantly (Huang et al., \citeyear{huang_lower-order_2023}; Yan et al., \citeyear{YAN2026117262}). To address this, we propose a novel method to analyze the aliasing structure among higher-order components.

To systematically compute the aliasing relationships among high-order components, we introduce the blocked wordlength distribution matrix (B-WDM). For an MBV design \(D = (D_t: D_b)\), each alias set contains a component of \(H_{qs}\). Let  \(l\) be an index of the alias set that contains the \(l\)th component of \(H_{qs}\), and denote \(T_l = (t_1, t_2, \dots, t_{s^m})^\top\) as the wordlengths of all words in that alias set, where the wordlength of the component itself is placed at \(t_1\). The vector \(T_0 = (t_1, t_2, \dots, t_{(s^m-1)/(s-1)})^\top\) corresponds to the wordlengths of the components in the treatment defining contrast subgroup. All \(T_l\) are classified into four categories: (i) the \(g\)-class, consisting solely of \(T_0\); (ii) the \(b\)-class, containing components aliased with significant block effects; (iii) the \(m\)-class, containing components aliased with main effects, i.e., \(T_l\) that contain the element 1; (iv) the \(\phi\)-class, consisting of all remaining \(T_l\). The collection \(T = (T_1, \ldots, T_N)\) is called the B-WDM.

For \(* \in \{g, b, m, \phi\}\), define \(F_i(T_{*l}, j)\) as the number of \(j\)-th order component effects contained in the alias set corresponding to \(T_{*l}\), provided that the alias set contains at least one \(i\)-th order component effect; otherwise, \(F_i(T_{*l}, j)=0\). Then, for any MBV \(s^{n-m}:s^p\) design, we have

\begin{align}\label{T_l&icjk}
^{\#}{}_{i}^*C_j^{(k)}=\sum_{*l}\boldsymbol{1}_{\{F_i(T_{*l}, j)=k+\delta_{ij}\}} F_i(T_{*l}, i),
\end{align}
where \(\mathbf{1}_{\{\cdot\}}\) is the indicator function; \(\delta_{ij}=1\) if \(i=j\), and 0 otherwise. In particular, for the aliasing of low-order components, it follows that
\begin{align}
^{\#}{}_{1}^*C_j^{(k)}&=\sum_{*l}\boldsymbol{1}_{\{F_1(T_{*l}, j)=k\}}F_1(T_{*l}, 1) =\#\{T_{*l}: F_1(T_{*l}, j)=k, F_1(T_{*l}, 1)\neq 0\},~j\geq 2,\label{C1jk}\\
^{\#}{}_{i}^*C_i^{(k)}&=(k + 1)\sum_{*l}\boldsymbol{1}_{\{F_i(T_{*l}, i)=k+1\}}=(k + 1)\#\left\{T_{*l}: F_i(T_{*l}, i)=k + 1\right\}.\label{C22k}
\end{align}

\begin{example}\label{exm3}
	(i) Consider a \(2^{5-1}:2^{3}\) MBV design \(D=(D_{t},D_{b})\) with 
	\(D_{t}=\{12345\}\) and \(D_{b}=\{12(=b_{1}),134(=b_{2}),234(=b_{3})\}\). 
	Its defining contrast subgroup is 
	\(G=\{I,12345,12b_{1},134b_{2},234b_{3},234b_{1}b_{2},134b_{1}b_{3},12b_{2}b_{3},b_{1}b_{2}\\b_{3},345b_{1},25b_{2},15b_{3},15b_{1}b_{2},25b_{1}b_{3},345b_{2}b_{3},12345b_{1}b_{2}b_{3}\}\).
	Based on the B-WDM, the alias sets \(T_{l}\) are classified into four classes as follows: 
	(i) \(g\)-class: \(T_0=(0,5)\);
	(ii) \(b\)-class: \(T_3=(2,3), T_{13}=(3,2), T_{14}=(3,2)\);
	(iii) \(m\)-class: \(T_1=(1,4), T_2=(1,4), T_4=(1,4), T_8=(1,4), T_{15}=(4,1)\);
	(iv) \(\phi\)-class: \(T_5=(2,3), T_6=(2,3), T_7=(3,2), T_9=(2,3), T_{10}=(2,3), T_{11}=(3,2), T_{12}=(2,3)\). 
	By equation \eqref{C1jk}, \(^{\#}{}_{1}^{m}C_{2}^{(0)}=\#\{T_{1},T_{2},T_{4},T_{8},T_{15}\}=5\), and for \(k\geq 2\), \(^{\#}{}_{1}^{m}C_{2}^{(k)}=0\); hence \(^{\#}{}_{1}^{m}C_{2}=(5)\). Similarly, \(^{\#}{}_{1}^{m}C_{3}=(5)\), \(^{\#}{}_{1}^{m}C_{4}=(0,5)\), \(^{\#}{}_{1}^{m}C_{5}=(5)\). By equation \eqref{C22k}, \(^{\#}{}_{2}^{\phi}C_{2}^{(0)}=\#\{T_5,T_6,T_7,T_9,T_{10},T_{11},T_{12}\}=7\) and \(^{\#}{}_{2}^{\phi}C_{2}^{(k)}=0\) for \(k\geq 2\); thus \(^{\#}{}_{2}^{\phi}C_{2}=(7)\).
	
	(ii) Consider a \(3^{5-2}:3^{2}\) MBV design \(D=(D_{t},D_{b})\) with 
	\(D_{t}=\{1234^{2},12^{2}5^{2}\}\) and \(D_{b}=\{13(=b_{1}),12(=b_{2})\}\). 
	Equations \eqref{C1jk} and \eqref{C22k} yield
	\( {^{\#}{}_{1}^m C_2}=(2,3) \),
	\( {^{\#}{}_{1}^m C_3}=(0^3,5) \),
	\( {^{\#}{}_{1}^m C_4}=(0^2,3,0^2,2) \),
	\( {^{\#}{}_{1}^m C_5}=(2,0,3) \), and
	\( {^{\#}{}_{2}^\phi C_2}=(1, 6) \).
\end{example}

\section{Relaionship with other criteria}\label{sec3}

Since the \(\text{B}^2\text{-ACNP}\) can characterize the aliasing structure more precisely, this section systematically analyzes the relationships between the \(\text{B}^2\text{-ACNP}\) and the classical block criteria MA, CE, and MEC in the context of MBV designs of type \(Kind~2\), based on the \(\text{B}^2\text{-ACNP}\) itself and the function \(\{F_i(T_{*l},j), i,j=0,1,\ldots,n, *=g,b,m,\phi\}\). Furthermore, expressions for these classification patterns are provided. 
According to Equation (\ref{T_l&icjk}), we provide a   convenient method to calculate  \({^{\#}{}_{i}^*C_0^{(k)}}\) and \({^{\#}{}_{0}^*C_j^{(k)}}\)(\(*=g,b,m,\phi\)).

\begin{theorem}\label{th1}
	For any \(s^{n-m}:s^p\) MBV design  \(D=(D_t:D_b)\), we have
	\begin{flalign}
	(i)\quad{^{\#}{}_{0}^gC_j}&=(0^{F_0(T_0,j)-\delta_{0j}},1),~{^{\#}{}_{0}^*C_j}=(0),~j=0,1,\cdots n,*=b,m,\phi.\nonumber&\\
	(ii)\quad{^{\#}{}_{i}^gC_0}&=(0,F_i(T_0,i)),~{^{\#}{}_{i}^*C_0}=(\sum_{*l}F_i(T_{*l},i)),~i=1,\cdots n,*=b,m,\phi.\nonumber&
	\end{flalign}
\end{theorem}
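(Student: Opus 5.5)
The plan is to specialise the general counting formula (\ref{T_l&icjk}) to $i=0$ and $j=0$. The argument rests on one structural observation about the B-WDM: the only zeroth-order component is the overall mean $I$, and it lies in exactly one alias set, namely the treatment defining contrast subgroup, whose wordlength vector is $T_0$. Consequently $F_0(T_0,0)=1$, and $F_0(T_{*l},j)=0$ for every other alias set $T_{*l}$ with $*\in\{b,m,\phi\}$, because those sets contain no word of length $0$. Likewise $F_i(T_{*l},0)$ equals $1$ if $T_{*l}=T_0$ and contains an $i$-th order word, and equals $0$ otherwise.

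For part (i), I will apply (\ref{T_l&icjk}) with $i=0$, which gives ${^{\#}{}_{0}^*C_j^{(k)}}=\sum_{*l}\mathbf 1_{\{F_0(T_{*l},j)=k+\delta_{0j}\}}F_0(T_{*l},0)$.

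\begin{itemize}
\item In the $g$-class the sum has the single term $T_0$, with weight $F_0(T_0,0)=1$. The count is therefore $1$ exactly when $k=F_0(T_0,j)-\delta_{0j}$, and $0$ for every other $k$. Under the zero-run convention this is the vector $(0^{F_0(T_0,j)-\delta_{0j}},1)$.
\item The case $j=0$ needs a separate check. Since $F_0(T_0,0)=1$ and $\delta_{00}=1$, we get $k=0$ and hence the vector $(1)$, which agrees with the formula.
\item For $*=b,m,\phi$ every weight $F_0(T_{*l},0)$ vanishes, so all entries are $0$. The vector is written $(0)$, with the trailing zeros suppressed by the convention.
\end{itemize}

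For part (ii), I will apply (\ref{T_l&icjk}) with $j=0$ and $i\ge1$, so that $\delta_{i0}=0$ and ${^{\#}{}_{i}^*C_0^{(k)}}=\sum_{*l}\mathbf 1_{\{F_i(T_{*l},0)=k\}}F_i(T_{*l},i)$.

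\begin{itemize}
\item In the $g$-class, $F_i(T_0,0)$ is $1$ whenever $T_0$ contains an $i$-th order word. The only possibly nonzero entry is therefore at $k=1$, with value $F_i(T_0,i)$, giving $(0,F_i(T_0,i))$. If $T_0$ has no $i$-th order word, then $F_i(T_0,i)=0$ and the formula still holds.
\item In the other classes $F_i(T_{*l},0)=0$ for every alias set, so only $k=0$ contributes. Its value is $\sum_{*l}F_i(T_{*l},i)$, giving $(\sum_{*l}F_i(T_{*l},i))$.
\end{itemize}

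The main obstacle is not computational but definitional. I must justify that $I$ belongs only to the $g$-class alias set, so that no $b$-, $m$- or $\phi$-class set contains a length-$0$ word, and that components are counted up to proportionality, so that the multiplicity of $I$ in $T_0$ is exactly one. I will argue both points from the definition of alias sets as cosets of the defining contrast subgroup: the subgroup itself is the unique coset containing $I$, and that coset is by definition the $g$-class. The remaining care is bookkeeping with the $0^h$ notation and the $\delta_{ij}$ shift in the diagonal case $i=j=0$.
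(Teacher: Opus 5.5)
Your proposal is correct and follows essentially the same route as the paper. Both arguments specialise the general counting formula for ${^{\#}{}_{i}^*C_j^{(k)}}$ to $i=0$ and to $j=0$, using that the identity lies only in $T_0$ with $F_0(T_0,0)=1$: the $g$-class sum then collapses to a single indicator, and every $b$-, $m$-, $\phi$-class alias set has zero weight in (i) and contributes only at $k=0$ in (ii). Your explicit checks of the diagonal case $j=0$ and of the case where $T_0$ contains no $i$-th order word make your version slightly more careful than the paper's.
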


\begin{proof}
	Since \({^{\#}{}_{0}^gC_j^{(k)}}\) represents the number of identity elements in the $g$-class that is aliased with \(k\) $j$fic's, according to Equation (\ref{T_l&icjk}), for \(j=0,1,\cdots,n\), we have
	\[
	{^{\#}{}_{0}^gC_j^{(k)}}=\sum_{gl} \mathbf{1}_{\{F_{0}(T_{gl}, j)=k+\delta_{0j}\}}  F_{0}(T_{gl}, 0)=\mathbf{1}_{\{F_{0}(T_{0}, j)=k+\delta_{0j}\}}= \begin{cases}1, k=F_{0}(T_{0}, j)-\delta_{0 j},\\
	0, k \neq F_{0}\left(T_{0}, j\right)-\delta_{0 j}.\end{cases}
	\]
In the $*$-class, there is no identity element at all, thus it is obvious that \({^{\#}{}_{0}^*C_j}=(0)\). This proves (i). 

	When \(k=0,1\) and \(j=1,2,\cdots,n\), from Equation (\ref{T_l&icjk}), we obtain
	\begin{align}
	{^{\#}{}_{i}^gC_0^{(0)}}&=\sum_{gl} \mathbf{1}_{\{F_{i}(T_{gl}, 0)=0+\delta_{i0}\}}  F_{i}(T_{gl}, i)=\mathbf{1}_{\{F_{i}(T_{0}, 0)=0\}}  F_{i}(T_{0}, i)=0,\nonumber\\
	{^{\#}{}_{i}^gC_0^{(1)}}&=\sum_{gl} \mathbf{1}_{\{F_{i}(T_{gl}, 0)=1+\delta_{i0}\}}  F_{i}(T_{gl}, i)=\mathbf{1}_{\{F_{i}(T_{0}, 0)=1\}}  F_{i}(T_{0}, i)=F_{i}(T_{0}, i).\nonumber
	\end{align}
	Since there is only one identity element, when \(k\geq2\), we have \({^{\#}{}_{i}^gC_0^{(k)}}=0\). Moreover, since all $i$fic's in the $*$-class are not aliased with the identity element, and the total number of $i$fic's in the $*$-class is \(\sum_{*l}F_i(T_{*l},i)\), it follows that \({^{\#}{}_{i}^*C_0}=(\sum_{*l}F_i(T_{*l},i))\). This proves (ii).\hfill $\qed$
\end{proof}

Let \(A_i\) and \(B_i\) denote the numbers of words of length \(i\) in the treatment-defining words and the block defining contrast subgroup, respectively. Based on these, four blocked MA criteria from different perspectives can be referred to Sitter et al. (\citeyear{sitter_fractional_1997}), Chen and Cheng (\citeyear{chen1999theory}), Zhang and Park (\citeyear{zhang2000optimal}), and Cheng and Wu (\citeyear{cheng2002choice}). These criteria are achieved by sequentially minimizing the following sequences:
\begin{equation}
\begin{aligned}\label{MA}
W_{scf} &= (A_3, B_2, A_4, ..., A_n, B_{n-1}),  \\
W_{cc} &= (3A_3 + B_2, A_4, 10A_5 + B_3, A_6, ...), \\
W_{zp}^{cw} &= (A_3, B_2, A_4, A_5, B_3, A_6, ..., A_{2j-1}, B_j, A_{2j}, ...), \\
W_{cw} &= (A_3, A_4, B_2, A_5, A_6, B_3 ..., A_{2j-1}, A_{2j}, B_j, ...).
\end{aligned}
\end{equation}
From equation (\ref{MA}), it can be seen that \(A_i\) and \(B_i\) play important roles in all four MA criteria. Wei et al. (\citeyear{wei_blocked_2014}) obtained the relationships \(A_i = {^{\#}{}_{i}^gC_0^{(1)}}\) and \(B_i = \sum_{k=0}^{K_{i-1}}{^{\#}{}_{i}^bC_i^{(k)}}\) for two-level designs with a single block variable. Next, we extend these to the case of \(s\)-level designs with MBVs.

\begin{theorem}\label{th2}
	Let \( D=(D_t:D_b) \) be an \( s^{n-m}:s^p \) MBV design. Then,
	$$ A_i={^{\#}{}_{i}^gC_0^{(1)}}=F_i(T_0,i)=F_0(T_0,i),\quad
B_i=\sum_{k=0}^{K_{i}-1}{^{\#}{}_{i}^bC_i^{(k)}}={^{\#}{}_{i}^bC_0^{(0)}}=\sum_{bl}F_i(T_{bl},i).$$
\end{theorem}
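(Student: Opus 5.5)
The plan is to prove the two chains of equalities separately, using Theorem \ref{th1} together with the counting formula (\ref{T_l&icjk}) and the B-WDM bookkeeping.

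\textbf{The chain for $A_i$.} The $g$-class consists of the single alias set $T_0$. This set is the treatment defining contrast subgroup, with the identity $I$ adjoined, so its words are exactly $I$ and the treatment defining words. Hence the number of words of length $i$ in $T_0$ is $A_i$ for $i\ge 1$. By the definition of $F$, $F_0(T_0,i)$ counts the $i$th-order components in the alias set, which must contain a $0$th-order component; $T_0$ does contain $I$, so $F_0(T_0,i)=A_i$. Whenever $A_i>0$, the set $T_0$ contains an $i$fic, so $F_i(T_0,i)$ is the same count, $A_i$. When $A_i=0$, both sides vanish by the convention in the definition of $F$. Finally, Theorem \ref{th1}(ii) gives ${^{\#}{}_{i}^gC_0}=(0,F_i(T_0,i))$, so ${^{\#}{}_{i}^gC_0^{(1)}}=F_i(T_0,i)$. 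Chaining these gives the first statement.

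\textbf{The chain for $B_i$.} I interpret $B_i$ as the number of treatment $i$fic's lying in the $b$-class. These are the components aliased with significant block effects, and under Kind~2 the significant block effects are the block main effects and 2bic's. So each such $i$fic is aliased with a block word, and each contributes one word of length $i$ to the block defining contrast subgroup, modulo proportionality. The steps are then:
\begin{enumerate}
\item Theorem \ref{th1}(ii) gives ${^{\#}{}_{i}^bC_0}=(\sum_{bl}F_i(T_{bl},i))$. So ${^{\#}{}_{i}^bC_0^{(0)}}=\sum_{bl}F_i(T_{bl},i)$, which is the total number of $i$fic's in the $b$-class.
\item Summing (\ref{T_l&icjk}) with $j=i$ over $k$ gives
\[
\sum_k {^{\#}{}_{i}^bC_i^{(k)}}=\sum_{bl}F_i(T_{bl},i)\sum_k \mathbf{1}_{\{F_i(T_{bl},i)=k+1\}}.
\]
Each $b$-class alias set containing an $i$fic has $F_i(T_{bl},i)\in\{1,\dots,K_i\}$. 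Therefore exactly one index $k\in\{0,\dots,K_i-1\}$ makes the indicator equal to $1$, and the right-hand side collapses to $\sum_{bl}F_i(T_{bl},i)$. This is where the summation range $0\le k\le K_i-1$ is needed.
\item Identify this total with $B_i$ through the correspondence between $b$-class alias sets and block defining words: each $i$fic in a $b$-class set, multiplied by the block component it is aliased with, gives a defining word whose treatment part has length $i$.
\end{enumerate}

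\textbf{Expected obstacle.} The main difficulty is the last identification. I must make precise which notion of ``wordlength in the block defining contrast subgroup'' is intended, namely the treatment-letter length of words involving significant block effects. I must also check that no $i$fic is double-counted when several block components are aliased with each other, which is allowed under $B^2$-ACNP. I would handle this by counting distinct treatment components modulo proportionality in each alias set, rather than counting block words, and argue that two block words with the same treatment part correspond to the same treatment component.
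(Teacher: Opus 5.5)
Your argument is the paper's proof. The $A_i$ chain comes from Theorem~\ref{th1}(ii) and the definition of $F$. The $B_i$ chain reads $B_i$ as the number of $i$fic's in the $b$-class, applies Theorem~\ref{th1}(ii), and collapses the indicator sum. The paper simply asserts that reading of $B_i$ ``by definition''. Your caution in step 3 is warranted. With dependent block factors a literal word count double counts: in Example~\ref{exm3}(i), $b_1b_2=b_3$ as columns, so both $234b_3$ and $234b_1b_2$ lie in $G_b$. Counting distinct treatment components is the reading the paper's numbers use: in Example~\ref{exm4}(ii) the $B_i$ add up to $36$, i.e.\ four $b$-class alias sets of $9$ components.

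The step that fails is the claim in step 2 that $F_i(T_{bl},i)\in\{1,\dots,K_i\}$ with $K_i=\binom{n}{i}$. This holds for $s=2$, where a component is determined by its support. It fails for $s\ge 3$, where one alias set can contain several $i$fic's with the same support.

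The paper's own Example~\ref{exm5} is a counterexample. Take $w=34^25^26\in G_t$. The $b$-class alias set containing $12$ (confounded with $b_1$) also contains $12\cdot w=1234^25^26$ and $12\cdot w^2=123^2456^2$. So $F_6=2>K_6=1$. Correspondingly, Table~\ref{tab1} lists ${^{\#}{}_{6}^bC_6}=(1,6)$. Hence $\sum_{k=0}^{K_6-1}{^{\#}{}_{6}^bC_6^{(k)}}=1$, whereas ${^{\#}{}_{6}^bC_0^{(0)}}=7$.

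The paper's proof rests on the same false bound (``at most $K_j-1$ other $i$fic's''). So this is a defect of the stated identity rather than of your strategy. Your collapse argument is valid only if the sum over $k$ is unrestricted, or if $s=2$. With the upper limit $K_i-1$, the middle equality cannot be proved for general $s$.
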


\begin{proof}
	From the definitions of \(A_i\) and the \(\text{B}^2\text{-ACNP}\), \(A_i\) is precisely the number of \(i\)fic's in the \(g\)-class. Thus, \(A_i = {^{\#}{}_{i}^gC_0^{(1)}}\). It further follows from Theorem \ref{th1} and the definition of \(F_i(T_{*l},j)\). By definition, \(B_i\) is the total number of \(i\)fic's in the \(b\)-class. Under the orthogonal component system, an \(i\)fic can be aliased with at most \(K_j-1\) other \(i\)fic's. Hence, \(B_i = \sum_{k=0}^{K_{i}-1}{^{\#}{}_{i}^bC_i^{(k)}}\). Since there is no identity element in the \(b\)-class, \({^{\#}{}_{i}^bC_0^{(0)}}\) also represents the total number of \(i\)fic's in the \(b\)-class. The last equation follows from Theorem \ref{th1}.\hfill $\qed$
\end{proof}

To establish connections between various criteria, the subsequent theorems in this section focus exclusively on  main effects and 2fic's, ignoring third- and higher-order $i$fic's. A low-order treatment component is said to be clear if it is not aliased with any significant block component or the identity element, nor with any other low-order treatment component. Let \(C_1\) and \(C_2\) denote the numbers of clear main effects and clear 2fis, respectively, and  \(CC\) denote the number of clear 2fic's. The following result can then be obtained.

\begin{theorem}\label{th3}
	For any \(s^{n-m}:s^p\) MBV design \(D=(D_t:D_b)\) of resolution \(R \geq III\), we have
	\begin{flalign}
	(i)&\quad C_1(D)={^{\#}{}_{1}^mC_2^{(0)}}(D)=\#\{ T_{ml} : F_1(T_{ml}, 1) = 1 \text{ and } F_1(T_{ml}, 2) = 0 \}.\nonumber&\\
	(ii)&\quad CC(D)={^{\#}{}_{2}^\phi C_2^{(0)}}(D)=\#\{ T_{\phi l} : F_2(T_{\phi l}, 2) = 1\} .\nonumber&
	\end{flalign}
\end{theorem}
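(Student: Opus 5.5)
The plan is to unwind the definition of ``clear'' and match it, term by term, with the classes of the B\(^2\)-ACNP. The bridge to the B-WDM side will be formulas (\ref{C1jk}) and (\ref{C22k}). Both parts of Theorem \ref{th3} have the same structure: first express the count of clear components as a specific entry of the pattern, then rewrite that entry as a count of alias sets \(T_{*l}\).

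For (i), I will fix a treatment main effect \(\gamma\). Since \(R\ge \mathrm{III}\), \(\gamma\) is aliased neither with \(I\) nor with another main effect. Under the Kind~2 convention, treatment main effects are taken to lie in the \(m\)-class. Since only main effects and 2fic's are considered low order, \(\gamma\) is clear exactly when it is aliased with no 2fic; aliasing with a significant block component is already excluded by its \(m\)-class membership. Hence \(C_1(D)={^{\#}{}_{1}^mC_2^{(0)}}(D)\). Applying (\ref{C1jk}) with \(j=2\), \(k=0\) and \(*=m\) gives the number of alias sets \(T_{ml}\) with \(F_1(T_{ml},2)=0\) and \(F_1(T_{ml},1)\ne 0\). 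Resolution \(\ge\mathrm{III}\) forces each alias set to contain at most one main effect, so \(F_1(T_{ml},1)\ne0\) is equivalent to \(F_1(T_{ml},1)=1\). Each such alias set therefore contributes exactly one main effect, which gives the stated set count.

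For (ii), a 2fic is clear iff it is aliased with no other 2fic, no main effect, no significant block component and not with \(I\). The last three conditions say precisely that it lies in the \(\phi\)-class (not \(g\), \(b\), or \(m\)). Within the \(\phi\)-class, clearness then means aliasing with zero other 2fic's, so \(CC(D)={^{\#}{}_{2}^\phi C_2^{(0)}}(D)\). Now (\ref{C22k}) with \(i=2\), \(k=0\) gives \(1\cdot\#\{T_{\phi l}:F_2(T_{\phi l},2)=1\}\), which completes (ii).

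The main obstacle is justifying that the four classes are disjoint and exhaustive with the intended priority. A component aliased with both a block component and a main effect must be counted once, and the paper's classification and Theorem~\ref{th6} implicitly use \(H_{qs}=D_t\cup U(D_b)\cup(U\setminus U(D_b))\). I will invoke this partition and the remark that components are aliased only within their own category. This is what makes ``in the \(\phi\)-class'' equivalent to ``not aliased with \(I\), main effects, or significant block components.'' Once that is granted, the rest is routine bookkeeping with (\ref{C1jk}) and (\ref{C22k}).
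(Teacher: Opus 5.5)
This is essentially the paper's proof. You identify clear main effects and clear 2fic's with the zero-aliasing entries of the $m$- and $\phi$-classes, then convert to B-WDM counts via (\ref{C1jk}) and (\ref{C22k}). Your remark that $R\ge\mathrm{III}$ forces $F_1(T_{ml},1)\in\{0,1\}$ usefully makes explicit a step the paper leaves tacit.

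One claim needs correcting: not every treatment main effect lies in the $m$-class. A main effect can be aliased with a significant 2bic, and it is then placed in the $b$-class; this is why Theorem~\ref{th5} imposes "no main effect aliased with significant block components" as an extra hypothesis. For such a $\gamma$, your statement ``$\gamma$ is clear exactly when it is aliased with no 2fic'' is false. Likewise, $D_t\cup U(D_b)\cup(U\setminus U(D_b))$ from Theorem~\ref{th6} is a partition of $H_{qs}$ only when $U(D_b)\cap D_t=\emptyset$.

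The identity $C_1(D)={^{\#}{}_{1}^mC_2^{(0)}}(D)$ still holds, because such $b$-class main effects are neither clear nor counted in the $m$-class entry. So argue it by restricting to $m$-class main effects from the start. Take the disjointness of the four classes directly from the priority order $g,b,m,\phi$ used in the classification, rather than from that partition.
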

\begin{proof}
	In a design with resolution \(R \geq III\), main effects are not aliased with each other. Thus, the number of main effects in the \(m\)-class that is not aliased with any 2fic's equals the number of clear main effects. Hence, \(C_1(D) = {^{\#}{}_{1}^mC_2^{(0)}}(D)\). From the definition of \(CC(D)\), clear 2fic's can only appear in the \(\phi\)-class. Thus, the number of 2fic's in the \(\phi\)-class that are not aliased with other 2fic's, \({^{\#}{}_{2}^\phi C_2^{(0)}}(D)\), is exactly \(CC(D)\). Combining equations (\ref{C1jk}) and (\ref{C22k}), the result is proved.\hfill $\qed$
\end{proof}
 
\begin{corollary}\label{cor1}
	For any \(2^{n-m}:2^p\) MBV design \(D=(D_t:D_b)\) of resolution \(R \geq III\), we have
	\begin{flalign}
	(i)\quad C_1(D)&={^{\#}{}_{1}^mC_2^{(0)}}(D)=\#\{ T_{ml} : F_1(T_{ml}, 1) = 1 \text{ and } F_1(T_{ml}, 2) = 0 \} ,\nonumber&\\
	(ii)\quad C_2(D)&={^{\#}{}_{2}^\phi C_2^{(0)}}(D)=\#\{ T_{\phi l} : F_2(T_{\phi l}, 2) = 1\} .\nonumber&
	\end{flalign}
\end{corollary}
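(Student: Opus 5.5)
The plan is to specialize Theorem \ref{th3} to $s=2$ and observe that, at two levels, components and interactions coincide. For $s=2$ each $j$fi $F_{i_1}\times\cdots\times F_{i_j}$ splits into $(s-1)^{j-1}=1$ component, so a 2fi is the same object as a 2fic. Hence $CC(D)=C_2(D)$. Part (i) then transcribes Theorem \ref{th3}(i) directly, since the definition of a clear main effect does not depend on $s$.

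For part (ii) I would first check that the notion of clarity used for $C_2$ in the two-level MBV setting matches the one behind $CC$. The relevant definition is the paper's: not aliased with any significant block component (block main effects and 2bic's), nor with $I$, nor with any other low-order treatment component. Since each 2fi is a single vector $\mathbf{b}\in GF(2)^n$, it belongs to exactly one alias set $T_l$. Being clear means this $T_l$ lies neither in the $g$-class (no alias with $I$) nor in the $b$-class (no alias with a significant block component). It must also avoid the $m$-class; in the present convention this encodes that the 2fi is not aliased with a main effect. So a clear 2fi lies in a $\phi$-class alias set containing no other 2fic, that is, $F_2(T_{\phi l},2)=1$.

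Conversely, any 2fic counted by ${^{\#}{}_{2}^\phi C_2^{(0)}}(D)$ is aliased with no main effect, no significant block effect, no identity and no other 2fic, so it is clear. Equation (\ref{C22k}) with $k=0$ gives ${^{\#}{}_{2}^\phi C_2^{(0)}}=\#\{T_{\phi l}:F_2(T_{\phi l},2)=1\}$, which yields the displayed formula.

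The only delicate point is the correspondence $C_2=CC$, in particular that the four-class partition is exhaustive and disjoint for 2fi's. Disjointness follows from the remark that a treatment component is aliased only with components in its own category. At $s=2$ it also helps that every alias set corresponds to a unique column of $H_{q2}$, so no component splitting occurs. With that settled, the corollary is immediate from Theorem \ref{th3}.
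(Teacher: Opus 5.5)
Your proposal is correct and follows the paper's route. The paper gives no separate proof: the corollary is Theorem \ref{th3} specialized to $s=2$, where each 2fi consists of $(s-1)^{1}=1$ component, so $C_2=CC$, and (ii) then follows from equation (\ref{C22k}) with $k=0$. Your closing remarks are slightly off in emphasis, but neither affects the argument: the fact that each alias set contains exactly one column of $H_{qs}$ holds for every $s$, and the four classes are disjoint because whole alias sets are classified in the priority order $g,b,m,\phi$.
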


\begin{remark}\label{re3}
	From equation (\ref{B2GMC}), it is easy to see that for designs with resolution \(R \geq III\), the \(\text{B}^2\text{-GMC}\) design must contain the maximum number of \(C_1\). When \(R \geq IV\), the \(\text{B}^2\text{-GMC}\) design has \(C_1 = {^{\#}{}_{1}^{m}C_2^{(0)}} = n\). Since the \(\text{B}^2\text{-GMC}\) design sequentially maximizes the sequence (\ref{B2ACNP}), it must simultaneously contain the maximum numbers of both \(C_1\) and \(CC\).
\end{remark}

\begin{theorem}
	Let \(D = (D_t : D_b)\) be an \(s^{n-m}: s^p\) MBV design with \(q = n-m\). If \(n \geq (s^{q-1} - 1)/(s - 1) + 2\), then the resolution \(R\) of design \(D\) satisfies \(R \leq III\), and \(C_1 = C_2 = CC = 0\).
\end{theorem}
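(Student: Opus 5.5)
The plan is to argue geometrically in $PG(q-1,s)$, treating the columns of $D_t$ as $n$ distinct points of $H_{qs}$ and using the pigeonhole principle on the lines through a fixed point. The basic fact is this: for any point $\gamma\in H_{qs}$ there are exactly $L=(s^{q-1}-1)/(s-1)$ lines through $\gamma$, and these lines partition $H_{qs}\setminus\{\gamma\}$ into blocks of $s$ points each. Moreover, for distinct points $d_1,d_2$ one has $\gamma=d_1d_2^{l}$ for some $l\in\mathbf{S}$ if and only if $\gamma$, $d_1$, $d_2$ are collinear and $\gamma\neq d_1,d_2$. So $B_2(D_t,\gamma)$ is governed by how the points of $D_t\setminus\{\gamma\}$ fall on the $L$ lines through $\gamma$.

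First I handle the resolution and $C_1$. I fix $\gamma\in D_t$; the remaining $n-1\ge L+1>L$ points of $D_t$ lie on the $L$ lines through $\gamma$. By pigeonhole, some line contains two of them, say $d_1,d_2$, and then $\gamma=d_1d_2^{l}$ for a suitable $l\in\mathbf{S}$. This gives a treatment defining word of length $3$, so $R\le III$. It also shows $B_2(D_t,\gamma)\ge 1$ for every $\gamma\in D_t$. By (\ref{b2two}) and Theorem \ref{th3}(i), ${^{\#}{}_{1}^mC_2^{(0)}}(D)=0$, that is, $C_1=0$.

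Next I handle $CC$. Take any 2fic $\gamma=d_1d_2^{l}$ with $d_1,d_2\in D_t$. If $\gamma\in D_t$, this 2fic is aliased with a main effect and is not clear. If $\gamma\in U(D_b)$, it is aliased with a significant block component and again is not clear. Otherwise $\gamma\notin D_t$, and all $n\ge L+2$ points of $D_t$ are distributed over the $L$ lines through $\gamma$. Writing $c_i$ for the number of points of $D_t$ on the $i$th line, we have $\sum_i (c_i-1)_+\ge \sum_i c_i - L\ge 2$. So either one line carries at least three points of $D_t$, or two different lines each carry at least two. In both cases I obtain at least two distinct unordered pairs $\{d,d'\}\subset D_t$ collinear with $\gamma$, hence two distinct 2fic's equal to $\gamma$; this gives $B_2(D_t,\gamma)\ge 2$. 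The point needing a little care is that two pairs sharing a point, such as $\{d_1,d_2\}$ and $\{d_1,d_3\}$, really give different 2fic's. They do, since the corresponding vectors involve different factor sets and so are not proportional. Hence no 2fic in the $\phi$-class is unaliased, so $CC={^{\#}{}_{2}^\phi C_2^{(0)}}(D)=0$ by Theorem \ref{th3}(ii).

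Finally, $C_2=0$ follows because a 2fi is clear only if all of its $s-1$ components are clear, and we have shown that no 2fic is clear (for $s=2$ this is exactly Corollary \ref{cor1}). The main obstacle I expect is the counting step for $\gamma\notin D_t$. There I must make sure the excess count yields two genuinely different aliased 2fic's rather than one 2fic counted twice with different exponents $l$. I would settle this by working with unordered point pairs on lines, and noting that each such pair determines exactly one $l$ with $d_1d_2^{l}=\gamma$.
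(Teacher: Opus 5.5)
Your proof is correct and follows the paper's route: pigeonhole over the $(s^{q-1}-1)/(s-1)$ lines through a fixed $\gamma$ (the partition of $H_q\setminus\{\gamma\}$ from Lemma 1 of Li et al.), taking $\gamma$ to be a main effect to get $R\le III$ and $C_1=0$, and taking $\gamma$ to be a 2fic to get $CC=0$. Your version is more careful than the paper's in three places: you place only the $n-1$ other columns when $\gamma\in D_t$; when $\gamma\notin D_t$ you use the excess $n-L\ge 2$ to guarantee a second pair beyond the one defining $\gamma$ itself, which the paper's one-pair pigeonhole does not by itself ensure; and you also justify $C_2=0$ explicitly.
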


\begin{proof}
	From Lemma 1 of Li et al. (2016), the columns in \(H_q \setminus \{\gamma\}\) can be partitioned into \((s^{q-1}-1)/(s-1)\) disjoint sets such that any two columns from the same set form a 2fic that is aliased with \(\gamma\). Since \(\#\{D_t\} = n \geq (s^{q-1}-1)/(s-1)+2\), placing the \(n\) treatment columns into these \((s^{q-1}-1)/(s-1)\) sets forces at least one set to contain two distinct treatment columns, say \(d_1\) and \(d_2\). Consequently, the 2fic formed by \(d_1\) and \(d_2\) is aliased with \(\gamma\).  
	If we take \(\gamma\) to be a main effect, then it is aliased with at least one 2fic, so \(C_1 = 0\). If we instead take \(\gamma\) to be a 2fic, then it is aliased with another 2fic. In summary, \(R \leq III\) and \(CC = 0\).\hfill $\qed$
\end{proof}

Similar to the comparison method between the two-level \(\text{B}^1\text{-GMC}\) and \(\text{B}^2\text{-GMC}\) criteria in Zhang et al. (\citeyear{zhang_optimal_2011}), and combined with Theorem \ref{th3}, we propose the following theorem.

\begin{theorem}\label{th4}
	Suppose that third- and higher-order treatment components as well as block components are negligible in the $Kind~2$ situation for an \(s^{n-m}:s^p\) design. The following results hold.
    \\
	(i) For \(p \leq 2\), the \(\text{B}^1\text{-GMC}\) design and the \(\text{B}^2\text{-GMC}\) design are identical for any \(n\), \(m\), and \(s\). \\
	(ii) For given \(n\), \(m\), \(s\), \(p > 2\) and \(R \geq III\), the \(\text{B}^2\text{-GMC}\) design must have a \(C_1\) greater than or equal to that of the \(\text{B}^1\text{-GMC}\) design. When \(R \geq IV\), both \(C_1\) and \(CC\) of the \(\text{B}^2\text{-GMC}\) design are greater than or equal to the corresponding numbers of the \(\text{B}^1\text{-GMC}\) design.
\end{theorem}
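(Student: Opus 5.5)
The plan is to compare the two criteria by comparing how they partition the treatment components. Both criteria rank designs by sequentially maximizing low-order aliasing counts, so it suffices to show where the classes agree and where they differ. Under the Kind~2 assumption only block main effects and 2bic's are significant, while the B$^1$-GMC criterion (Kind~1) treats every block component as significant and requires that the block components not be aliased with one another.

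\emph{Part (i).} For $p\le 2$, I would count block components. A scheme with $p$ independent block words has $(s^p-1)/(s-1)$ block components, and every one of them is a block main effect or a 2bic. When $p=1$ there is only the main effect. When $p=2$ the components are $b_1$, $b_2$ and $b_1b_2^l$ for $l\in\mathbf{S}$. Hence $U(D_b)$ coincides with the full set of block components, and the $b$-class of the B$^2$-ACNP coincides with the $b$-class of the B$^1$-ACNP. It then follows that the $m$- and $\phi$-classes also coincide, so the sequence (\ref{B2GMC}) is the same as the B$^1$-GMC sequence. For $p\le 2$, independence of the block words already gives the non-aliasing of block components that B$^1$ requires. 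So the two criteria rank every design identically, and their optimal designs are the same.

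\emph{Part (ii).} For $p>2$, the B$^1$ criterion also counts $j$bic's with $j\ge 3$ as significant. Treatment components aliased with these higher-order block components therefore lie in the B$^1$ $b$-class but in the B$^2$ $\phi$-class. By Theorem \ref{th3}, $C_1={^{\#}{}_{1}^mC_2^{(0)}}$, and this quantity is the first entry maximized in (\ref{B2GMC}). The B$^2$-GMC design maximizes $C_1$ over all designs, so its $C_1$ is at least that of any other design, and in particular at least that of the B$^1$-GMC design. This gives the first claim.

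When $R\ge IV$, every design has $C_1=n$ (Remark \ref{re3}), so the first component ties. The next component is ${^{\#}{}_{2}^\phi C_2}$, whose zeroth entry is $CC$ by Theorem \ref{th3}(ii). Among designs tied on the first component, the B$^2$-GMC design maximizes ${^{\#}{}_{2}^\phi C_2}$ lexicographically, and hence maximizes $CC$ as well. Since the B$^1$-GMC design is among the tied competitors, its $CC$ is at most that of the B$^2$-GMC design.

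The main obstacle is justifying in part (ii) that the lexicographic maximization in (\ref{B2GMC}) really yields the maximal $CC$. Leading entries of the later vectors could otherwise be traded off against earlier components. For $R\ge IV$ this cannot happen, because ${^{\#}{}_{1}^mC_2}$ is the constant vector $(n)$, so ${^{\#}{}_{2}^\phi C_2^{(0)}}$ is the first entry that can discriminate between designs. I would also note that the comparison holds only under the B$^2$ definition of clear effects.
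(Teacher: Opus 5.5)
Your proposal is correct and follows essentially the paper's route. Part (i) rests on the fact that for $p\le 2$ every block component is a block main effect or a 2bic, so the Kind~1 and Kind~2 classifications and candidate sets coincide. Part (ii) uses two facts: the B$^1$-GMC design is an admissible Kind~2 competitor, i.e.\ $\mathscr{S}_1(n,m,s)\subseteq\mathscr{S}_2(n,m,s)$, which you should state explicitly; and the sequential-maximization argument of Remark~\ref{re3}.

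The paper also uses the class inclusions $m_1\text{-class}\subseteq m_2\text{-class}$ and $\phi_1\text{-class}\subseteq\phi_2\text{-class}$. These give ${^{\#}{}_{1}^{m_1}C_2^{(0)}(D^*_1)}\le{^{\#}{}_{1}^{m_2}C_2^{(0)}(D^*_1)}$ and ${^{\#}{}_{2}^{\phi_1}C_2^{(0)}(D^*_1)}\le{^{\#}{}_{2}^{\phi_2}C_2^{(0)}(D^*_1)}$. So the comparison also holds against the B$^1$-GMC design's own Kind~1 counts, and your closing caveat that it holds only under the B$^2$ definition of clear effects is unnecessary.
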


\begin{proof}
	By the definitions of $Kind~1$, $Kind~2$, and the \(\text{B}^1\text{-GMC}\) and \(\text{B}^2\text{-GMC}\) criteria, when \(p \leq 2\), there are no third- or higher-order block interaction components. Thus, all block components are significant. Consequently, for \(p \leq 2\), the candidate designs for $Kind~1$ and $Kind~2$ are identical, and the corresponding \(\text{B}^1\text{-GMC}\) and \(\text{B}^2\text{-GMC}\) criteria are also the same. Thus, part (i) is proved.  
    
	Let \(\mathscr{S}_i(n,m,s)(i=1,2)\) be the sets of candidate  \(s^{n-m}:s^p\) designs under the $Kind~1$ or 2 situation. Since $Kind~2$ allows aliasing relationships among block factors, we have \(\mathscr{S}_1(n,m,s) \subseteq \mathscr{S}_2(n,m,s)\). For a given design, according to the definitions of the \(*_i\)-classes with \(*=g,b,m\) and \(\phi\), and since $Kind~2$ considers only significant block components, for \(p > 2\) we have the inclusion relationship
	\begin{equation}
	\begin{aligned}\label{kindclass}
	g_1\text{-class} \supseteq g_2\text{-class}, \quad
	b_1\text{-class} \supseteq b_2\text{-class}, \quad
	m_1\text{-class} \subseteq m_2\text{-class}, \quad
	\phi_1\text{-class} \subseteq \phi_2\text{-class}.
	\end{aligned}
	\end{equation}
	Furthermore, for a given design, changing from $Kind~1$ to $Kind~2$ (or vice versa) alters only the component classification and does not change the specific aliasing relationships within an alias set. Let \(D^*_1\) be the \(\text{B}^1\text{-GMC}\) design selected from the set \(\mathscr{S}_1(n,m,s)\). Since \(\mathscr{S}_1(n,m,s) \subseteq \mathscr{S}_2(n,m,s)\), we have \(D^*_1 \in \mathscr{S}_2(n,m,s)\). Thus, by equation (\ref{kindclass}), it follows that \({^{\#}{}_{1}^{m_1}C_2^{(0)}(D^*_1)} \leq {^{\#}{}_{1}^{m_2}C_2^{(0)}(D^*_1)}\) and \({^{\#}{}_{2}^{\phi_1}C_2^{(0)}(D^*_1)} \leq {^{\#}{}_{2}^{\phi_2}C_2^{(0)}(D^*_1)}\). This means that both \({^{\#}{}_{1}^{m_1}C_2^{(0)}(D^*_1)}\) and \({^{\#}{}_{2}^{\phi_1}C_2^{(0)}(D^*_1)}\) for the \(\text{B}^1\text{-GMC}\) design are no greater than the respective maximum possible values within \(\mathscr{S}_2(n,m,s)\), with both inequalities holding simultaneously when the resolution \(R \geq IV\). Combining this with Theorem \ref{th3} and Remark \ref{re3}, part (ii) is proved.\hfill $\qed$
\end{proof}

Cheng and Mukerjee (\citeyear{mukerjee_blocked_2001}) proposed the MEC criterion for blocked designs.  Let \(E_r(D)\) be the number of models in design \(D\) that can estimate all main effects and \(r\) 2fic's simultaneously. Let \(f\) be the total number of \(T_l\)'s in the \(\phi\)-class of the B-WDM.  Let \(\left| \mathcal{C}_i \right|\) denote the number of alias sets containing exactly \(i\) 2fic's. Obviously, \(|\mathcal{C}_i| = {^{\#}{}_{2}^{\phi}C_2^{(i-1)}}/i\). Denote \(l = \min\{n(n-1)/2,\; s^{m}\}\). For the $Kind~2$ case of MBV designs, a simple calculation of \(E_r(D)\) can be achieved using the classification model of the \(\text{B}^2\text{-ACNP}\).

\begin{theorem}\label{th5}
	For any \(s^{n-m}:s^p\) MBV design \(D=(D_t: D_b)\) having resolution \(R \geq III\) and no main effect aliased with significant block components,  for \(r \leq f\) we have
	\begin{equation}\label{ErD}
	E_r(D) = \sum_{\substack{r_1 + r_2 + \cdots + r_l = r \\ 0 \le r_k \le \left| \mathcal{C}_i \right|}} \prod_{i=1}^l i^{r_i} \binom{\left| \mathcal{C}_i \right|}{r_i} = \sum_{1\leq {\phi l}_1 < {\phi l}_2 < \cdots < {\phi l}_r\leq N}
	\prod_{q=1}^r F_2(T_{{\phi l}_q}, 2),
	\end{equation}
	and for \(r > f\), \(E_r(D) = 0\).
\end{theorem}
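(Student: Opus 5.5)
The approach is the standard estimation-capacity counting argument of Cheng and Mukerjee, rephrased through the $\phi$-class alias sets of the B-WDM.

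\textbf{Step 1: estimable models.} Fix a model containing all main effects and $r$ 2fic's. It is estimable exactly when no two of its effects fall in the same alias set, and none of them is aliased with the mean or with a significant block component. Because $R\ge III$, main effects are not aliased with one another. By hypothesis, no main effect is aliased with a significant block component. Hence the main effects lie in distinct $m$-class alias sets, and this part of the model is never an obstruction.

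\textbf{Step 2: which 2fic's are admissible.} The $r$ 2fic's must avoid:
\begin{itemize}
\item the $g$-class, since they would be aliased with $I$;
\item the $b$-class, since they would be aliased with significant block effects;
\item the $m$-class, since they would be aliased with a main effect already in the model.
\end{itemize}
So all $r$ 2fic's lie in $\phi$-class alias sets. No two of them may lie in the same alias set, since that would make two columns of the model matrix proportional.

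\textbf{Step 3: counting.} An estimable model therefore corresponds to a choice of $r$ distinct $\phi$-class alias sets $T_{\phi l_1},\dots,T_{\phi l_r}$, together with one 2fic from each. The $q$th chosen set offers $F_2(T_{\phi l_q},2)$ choices. This gives the second expression in (\ref{ErD}):
$$E_r(D)=\sum_{\phi l_1<\cdots<\phi l_r}\prod_{q=1}^r F_2(T_{\phi l_q},2).$$
Alias sets with $F_2=0$ contribute zero automatically, so it does not matter whether the sum runs over all $\phi$-class sets or only those containing a 2fic.

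To get the first expression, group the $\phi$-class sets by the number $i$ of 2fic's they contain. There are $|\mathcal C_i|={^{\#}{}_{2}^{\phi}C_2^{(i-1)}}/i$ such sets, by (\ref{C22k}). Choosing $r_i$ of them contributes the factor $\binom{|\mathcal C_i|}{r_i}i^{r_i}$, subject to $\sum_i r_i=r$. Since one alias set contains at most $\min\{n(n-1)/2,s^m\}=l$ 2fic's, we have $i\le l$, which matches the range of the product in (\ref{ErD}).

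\textbf{Step 4: the case $r>f$.} When $r>f$, there are fewer than $r$ distinct $\phi$-class alias sets, so the sum is empty and $E_r(D)=0$.

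\textbf{Main obstacle.} The delicate point is Step 2: justifying that 2fic's in the $m$-, $b$- and $g$-classes are excluded and that all $\phi$-class ones are admissible. This needs the fact that every $\phi$-class set is free of main effects, significant block components and $I$. The definition of the four classes should give this directly, but it has to be checked that the classification is a genuine partition with no overlap. Orthogonality of the components ensures that distinct alias sets give linearly independent columns, so estimability reduces exactly to the distinctness condition used above.
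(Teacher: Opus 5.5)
Your proposal is correct and follows essentially the same argument as the paper. Both proofs confine the admissible 2fic's to the $\phi$-class, using that main effects are unaliased with each other and with significant block components. Both then count models by choosing $r$ distinct $\phi$-class alias sets and one 2fic from each, grouping the sets by $|\mathcal{C}_i|$ to get the first expression. Your extra remarks, that the four classes form a genuine partition and that $\phi$-class sets with $F_2=0$ contribute nothing, make explicit points the paper leaves implicit.
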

\begin{proof}
	Note that all main effects belong to the \(m\)-class and are not aliased with each other. Thus, all main effects are estimable, whereas a 2fic is possibly estimable only when it belongs to the \(\phi\)-class.
	When \(r > f\), there obviously exists no model satisfying the requirement, so \(E_r(D) = 0\).
	When \(r \leq f\), by Lemma 1 of Li et al. (2015), the number of 2fic's in any alias set is at most \(l\)  in a MBV design \(D\). Hence, \(i = 1, 2, \dots, l\). The number of ways to select \(r_i\) 2fic's from these \(\left| \mathcal{C}_i \right|\) alias sets (at most one per alias set) is \(i^{r_i} \binom{\left| \mathcal{C}_i \right|}{r_i}\). For a given set of \(r_1, \dots, r_l\) satisfying \(\sum r_i = r\), the total number of ways to select the alias sets across all classes is \(\prod_{i=1}^l i^{r_i} \binom{\left| \mathcal{C}_i \right|}{r_i}\). Summing over all possible combinations of \(r_1, \dots, r_l\) gives the left-hand side of equation (\ref{ErD}).
	
    The right-hand side can be proved similarly. All estimable 2fic's lie in the \(f\) alias sets corresponding to \(T_{\phi l}\). Select \(r\) effect distribution vectors \(T_{{\phi l}_1}, T_{{\phi l}_2}, \dots, T_{{\phi l}_r}\) from the \(\phi\)-class, and  one 2fic from each of these \(r\) \(T_{\phi l}\)'s. Then, the total number of possible models is \(\prod_{q = 1}^r F_2(T_{{\phi l}_q}, 2)\). Summing this product over all choices of the \(T_{\phi l}\)'s yields the right-hand side of equation (\ref{ErD}).\hfill $\qed$
\end{proof}

\begin{corollary}
	For any \(s^{n-m}:s^p\) MBV design \(D=(D_t:D_b)\) and \(l = \min\{n(n-1)/2,\; s^{m}\}\), we have
	$$
	 \sum_{i=1}^{l} i\left| \mathcal{C}_i \right|=\sum_{i=1}^{l}{^{\#}{}_{2}^{\phi}C_2^{(i-1)}}=\sum_{q=1}^{f}F_2(T_{{\phi l}_q}, 2),\quad 
	 \sum_{i=1}^{l} i^2\left| \mathcal{C}_i \right|=\sum_{i=1}^{l} i{^{\#}{}_{2}^{\phi}C_2^{(i-1)}}=\sum_{q=1}^{f}(F_2(T_{{\phi l}_q}, 2))^2.  
	$$
\end{corollary}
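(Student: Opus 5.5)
The plan is to get both identities from the relations already set up for Theorem~\ref{th5}: the definition $|\mathcal{C}_i| = {^{\#}{}_{2}^{\phi}C_2^{(i-1)}}/i$, and the description of the $\phi$-class alias sets through the B-WDM. Each identity is a double count of the same objects, namely the 2fic's in the $\phi$-class, grouped by alias set.

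\textbf{First equality chain.} Since $|\mathcal{C}_i| = {^{\#}{}_{2}^{\phi}C_2^{(i-1)}}/i$, we have $i|\mathcal{C}_i| = {^{\#}{}_{2}^{\phi}C_2^{(i-1)}}$ for every $i$. Summing over $i=1,\dots,l$ gives the middle expression directly.

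For the right-hand side, I will use equation (\ref{C22k}) with $i=2$. It gives
\[
{^{\#}{}_{2}^{\phi}C_2^{(i-1)}} = i\,\#\{T_{\phi l}: F_2(T_{\phi l},2)=i\}.
\]
Summing over $i$ gives $\sum_{i} i\,\#\{T_{\phi l}:F_2(T_{\phi l},2)=i\}$. Regrouping this sum by alias set turns it into $\sum_{q=1}^{f}F_2(T_{\phi l_q},2)$.

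Alias sets in the $\phi$-class that contain no 2fic have $F_2=0$. They contribute zero to both sides, so they are harmless.

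\textbf{Second equality chain.} This goes the same way. Multiply $i|\mathcal{C}_i| = {^{\#}{}_{2}^{\phi}C_2^{(i-1)}}$ by $i$ and sum over $i$. Then regroup $\sum_i i^2\,\#\{T_{\phi l}:F_2(T_{\phi l},2)=i\}$ as $\sum_{q=1}^{f}(F_2(T_{\phi l_q},2))^2$.

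\textbf{Range of $i$.} It remains to check that restricting to $i\le l$ loses nothing. As noted in the proof of Theorem~\ref{th5} (via Lemma 1 of Li et al.), no alias set contains more than $l=\min\{n(n-1)/2,s^m\}$ 2fic's. Hence $F_2(T_{\phi l},2)\le l$ for every alias set, and every one of them is counted.

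\textbf{Main obstacle.} The only delicate point is notational. $F_2(T_{\phi l},2)$ is defined to be nonzero exactly when the alias set contains a 2fic, and in that case it equals the number of 2fic's in the set. I will make this identification explicit, so that $\#\{T_{\phi l}:F_2=i\}=|\mathcal{C}_i|$ for $i\ge1$. Both chains then follow by the regrouping described above.
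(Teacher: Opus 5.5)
Your proof is correct and follows the reasoning the paper leaves implicit (the corollary is stated without proof right after Theorem~\ref{th5}). The relation $i|\mathcal{C}_i|={^{\#}{}_{2}^{\phi}C_2^{(i-1)}}$ gives the middle terms, and equation~(\ref{C22k}) together with $F_2(T_{\phi l},2)\le l$ regroups them alias set by alias set. Two small points are worth stating explicitly: $\mathcal{C}_i$ must be read as counting only $\phi$-class alias sets, as you do, since otherwise $b$- or $m$-class sets containing 2fic's would break the first equality in each chain; and for $s>2$ the $n(n-1)/2$ part of the bound $l$ relies on the standing assumption $R\ge \text{III}$, which prevents two components of the same 2fi from lying in one alias set.
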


\begin{example}\label{exm4}
	(i)  Consider a \(2^{5-1}:2^2\) \(\text{B}^1\text{-GMC}\) design \(D\). Since \(p=2\), by Theorem \ref{th4}, it is also a \(\text{B}^2\text{-GMC}\) design. Based on the results of Wei et al. (2014), the defining relation of the \(\text{B}^2\text{-GMC}\) design with resolution \(R=III\) is \(I = 12345 = 12b_1 = 234b_2\). After calculating its alias sets, the corresponding B-WDM is obtained as
	$$
	T=[T_1,T_2,\cdots,T_{15}]=\left[
	\begin{array}{ccccccccccccccc}
	1 & 1 & 2 & 1 & 2 & 2 & 3 & 1 & 2 & 2 & 3 & 2 & 3 & 3 & 4\\
	4 & 4 & 3 & 4 & 3 & 3 & 2 & 4 & 3 & 3 & 2 & 3 & 2 & 2 & 1
	\end{array}\right].
	$$
	It is partitioned into four classes, where the subset \(\{T_5, T_6, T_7, T_9, T_{10}, T_{11}, T_{12}\}\) constitutes the \(\phi\)-class \(T_{\phi l}\). Based on the set \(\{F_i(T_l, j), i, j = 0, 1, \cdots, n\}\) and applying Theorems \ref{th1}–\ref{th5}, we calculate \((A_1, A_2, \cdots, A_5) = (0, 0, 0, 0, 1)\) and \((B_1, B_2, \cdots, B_5) = (0, 3, 3, 0, 0)\). Furthermore, we obtain \(C_1(D) = 5\) and \(C_2(D) = 7\). Since \(f = 7\), we have \(E_1(D) = 7\), \(E_2(D) = 21\), \(E_3(D) = 35\), \(E_4(D) = 35\), \(E_5(D) = 21\), \(E_6(D) = 7\), and \(E_7(D) = 1\).
	
	(ii) Consider a \(3^{6-2}:3^3\) MBV design \(D\) with resolution \(R=III\). Its defining relation is \(I = 1235^2 = 12^246^2 = 12b_1 = 134b_2 = 23^24^2b_3\). After calculating all its alias sets, the corresponding B-WDM is obtained. Following classification and in conjunction with the set \(\{F_i(T_l, j), i, j = 0, 1, \cdots, n\}\), Theorem \ref{th1}, and Theorem \ref{th2}, we calculate \((A_1, A_2, \cdots, A_6) = (0, 0, 0, 2, 2, 0)\) and \((B_1, B_2, \cdots, B_6) = (0, 2, 12, 10, 8, 4)\). From Theorem \ref{th3}, we obtain \(C_1(D) = 6\) and \(CC(D) = 18\). Since it contains 30 \(T_l\)'s belonging to the \(\phi\)-class, we have \(f = 30\). According to Theorem \ref{th5}, we can calculate \(E_r(D)\); for example, \(E_1(D) = 28\), \(E_2(D) = 373\), etc. 
\end{example}

\section{Algorithms and visualization of $\text{B}^2\text{-ACNP}$}\label{sec4}

\subsection{The $\text{B}^2\text{-ACNP}$ algorithms}
In this subsection, we present a concise algorithm for computing the complete \(\text{B}^2\text{-ACNP}\) of \(s^{n-m}:s^{p}\) MBV designs. The procedure consists of three main steps: (i) construct the defining contrast subgroup \(G = G_t: G_b\); (ii) generate all alias sets and the B‑WDM, and classify the vectors \(T_{*l}\) into the four categories \(g, b, m, \phi\); (iii) accumulate the counts $^{\#}{}_{i}^*C_j^{(k)}$ from the B‑WDM. The algorithm is implemented in Python, and its computational complexity primarily depends on the number of independent columns \(q = n - m\) and the number of blocks \(p\). The detailed process of Algorithm \ref{alg1} is described by the following three steps:

{\it Step 1: Construct  the defining contrast subgroup \(G\).} 
Let \(V(I_t)\) be an \(n \times m\) matrix whose columns are \(m\) independent treatment defining words. All linear combinations of treatment defining words over \(\mathrm{GF}(s)\) are generated by an \(m \times s^m\) coefficient matrix \(Y_t\). The matrix $V_1(G_t) = V(I_t) \, Y_t \pmod s
$ contains all treatment-defining words and the equivalent forms. After removing the columns whose first nonzero element is not 1, we obtain \(V_1(G_t) = [\mathbf{0} \mid V(G_t)]\).
For the blocking part, let \(V(I_b)\) be an \(n \times p\) matrix whose columns are \(p\) independent block defining words. The coefficient matrix \(Y_b\) generates all linear combinations of block defining words over \(\mathrm{GF}(s)\). Since we are only concerned with block main effects and 2bic's, \(Y_b\) retains only those columns with exactly one or two nonzero elements. Compute $V_1(G_b) = V(I_b) \, Y_b \pmod s
$ and remove duplicate columns to obtain the block components. Add each column of \(V_1(G_b)\) (mod \(s\)) to all columns of \(V_1(G_t)\) and obtain the block defining contrast subgroup \(V_2(G_b)\).

{\it Step 2: Calculate alias sets and the B-WDM.} All treatment components are represented by the first \(N = (s^{n-m} - 1)/(s - 1)\) columns of the \(s\)-level Yates order, arranged into an \(n \times N\) matrix \(Y_q\). Let \(V_2(G_t) = [\,\mathbf{0}\mid V(G_t)\mid 2V(G_t)\mid\cdots\mid(s-1)V(G_t)\,]\pmod s\) be the horizontally concatenated matrix of scaled treatment defining words. For each column \(\mathbf{y}_q\) of \(Y_q\), adding \(\mathbf{y}_q\) (mod \(s\)) to each column of \(V_2(G_t)\) yields the alias set containing the component represented by that \(\mathbf{y}_q\). This alias set consists of \(s^m\) component vectors; counting the number of nonzero entries in each vector gives the wordlength distribution vector \(T_l\) of the alias set. Repeating this process for all \(\mathbf{y}_q\) yields the complete B‑WDM \(T = (T_1, \dots, T_N)\). In particular, \(T_0\) corresponds to the defining contrast subgroup \(G_t\). 
The vectors \(T_l\) are then classified into the following four categories: 
(i) \(g\)-class: \(T_0\) (i.e., \(G_t\));
(ii) \(b\)-class: any \(T_l\) whose generating alias set contains a block component from \(V_2(G_b)\);
(iii) \(m\)-class: any \(T_l\) that contains the element \(1\) (i.e., a main effect);
(iv) \(\phi\)-class: all remaining \(T_l\).

{\it Step 3: Compute the complete B\(^2\)-ACNP.}
For each \(T_{*l}\) (\(* \in \{g,b,m,\phi\}\)), let \(p_i\) be the number of \(i\) elements in \(T_{*l}\) and \(q_j\) be the number of \(j\) elements. For a fixed pair \((i,j)\), update \(\#_i^{*}C_j^{(k)}\) as follows:
 If \(i = j\), set \(k = q_j - 1\) and add \(p_i\) to \(^{\#}{}_{i}^*C_j^{(k)}\); 
if \(i \neq j\), set \(k = q_j\) and add \(p_i\) to \(^{\#}{}_{i}^*C_j^{(k)}\).

After processing all \(T_{*l}\), the complete B\(^2\)-ACNP is obtained. Since most \(^{\#}{}_{i}^*C_j^{(k)}\) are zero, a sparse storage structure (e.g., Python’s ``defaultdict”) should be used for efficiency. 

\begin{algorithm}
	\caption{Complete computation of the B$^2$-ACNP}\label{alg1}
	\KwIn{\\
		$\quad$$n$, $p$: The number of factors and block factors.\\
		$\quad$$s$: Factorial levels.\\
		$\quad$$V(I_t)$, $V(I_b)$: Matrices of independent defining words.
		\\}
	\KwOut{\\
		All  of  $^{\#}{}_{i}^*C_j^{(k)}$.
		\\}
	m = The number of columns in matrix \( V(I_t) \)\;
	Generate coefficient matrices $Y_t$ and $Y_b$ \;
	$V_1(G_t) \gets V(I_t) \cdot Y_t \pmod s$, \quad $V_1(G_b) \gets V(I_b) \cdot Y_b \pmod s$\;
	Remove the column vectors in \(V_1(G_t)\) and \(V_1(G_b)\) whose first nonzero element is not 1\; 
	For each column of \(V_1(G_b)\), add it \(\pmod s\) to each column of \(V_1(G_t)\), and store the results into \(V_2(G_b)\)\;
	Compute the number of independent columns \(q = n - m\) and \(N = (s^q - 1)/(s - 1)\)\;
	Generate the \(n \times N\) Yates matrix \(Y_q\), i.e., \(H_q\)\;
	$V(G_t) \gets V_1(G_t)$ with the first (identity) column removed\;
	$V_2(G_t) \gets \bigl[ \mathbf{0} \mid V(G_t) \mid 2V(G_t) \mid \cdots \mid (s-1)V(G_t) \bigr] \pmod s$\;
	$T_0 \gets$ wordlengths of columns of $V_1(G_t)$ (count of non-zero entries per column)\;
	\For{each column $\mathbf{y}_q$ of $Y_q$}{
		$A_l \gets \mathbf{y}_q$ (mod $s$) added to each column of $V_2(G_t)$\;
		$T_l \gets$ number of non-zero entries in each column of $A_l$\;
	}
	Classify \(T_l\) into four categories: \(g, b, m, \phi\)\;
	\For{all \(l\) in each class \(*\) (\(* \in \{g,b,m,\phi\}\))}
	{
		\For{$i=0 \to n$}
		{
			\For{$j=0 \to n$}
			{
				Calculate \( p_i \) and \( q_j \) in vector \( T_l \)\;
				\eIf{$i==j$}
				{
					$k=q_j-1$\;
					${^{\#}{}_{i}^*C_j^{(k)}} \leftarrow {^{\#}{}_{i}^*C_j^{(k)}} + p_i$\;
				}
				{
					$k=q_j$\;	
					${^{\#}{}_{i}^*C_j^{(k)}} \leftarrow {^{\#}{}_{i}^*C_j^{(k)}} + p_i$\;
				}
			}
		}
	}
	Output all ${^{\#}{}_{i}^*C_j}$
\end{algorithm}

\begin{example}\label{exm5}
	Consider a $3^{6-2}:3^3$ design $D$ with $I=12^235^2= 12^246^2=12b_1=13b_2=14b_3$. By Algorithm \ref{alg1}, we have
	$$
	V(I_t)=\begin{bmatrix}
	1 & 1 \\2 & 2 \\1 & 0 \\0 & 1 \\2 & 0\\0 & 2 \end{bmatrix}_{6\times 2}, 
	V(I_b)=\begin{bmatrix}
	1 & 1 & 1\\1 & 0 & 0\\0 & 1 & 0\\0 & 0 & 1\\0 & 0 & 0\\0 & 0 & 0\end{bmatrix}_{6\times 3}, 
	Y_t= \begin{bmatrix}
	0 & 1 & 2 & 0 & 1 & 2 & 0 & 1 & 2 \\
	0 & 0 & 0 & 1 & 1 & 1 & 2 & 2 & 2
	\end{bmatrix}
	_{2\times 9},
	Y_b= \begin{bmatrix}
	1 & 2 & 0 & \cdots & 0 & 0 \\
	0 & 0 & 1 & \cdots & 1 & 2 \\
	0 & 0 & 0 & \cdots & 2 & 2 \\
	\end{bmatrix}
	_{3\times 18}.
	$$
	Through modulo 3 matrix multiplication, we obtain \( V_1(G_t) = V(I_t) \times Y_t \) and \( V_1(G_b) = V(I_b) \times Y_b \). By screening the column vectors of the two matrices respectively, we obtain the treatment defining contrast subgroup \(G_t\) and \(U(D_b)\):
	$$
	V_1(G_t)= 
	\begin{bmatrix}
	0 & 0 & 1 & 1 & 1 \\
	0 & 0 & 2 & 2 & 2 \\
	0 & 1 & 2 & 1 & 0 \\
	0 & 2 & 2 & 0 & 1 \\
	0 & 2 & 1 & 2 & 0 \\
	0 & 1 & 1 & 0 & 2 
	\end{bmatrix}
	_{6\times 5},
	V_1(G_b) =
	\left[
	\begin{array}{ccccccccc}
	1 & 1 & 1 & 0 & 1 & 0 & 0 & 1 & 1 \\
	1 & 0 & 0 & 0 & 0 & 1 & 1 & 2 & 2 \\
	0 & 1 & 0 & 1 & 2 & 0 & 2 & 0 & 2 \\
	0 & 0 & 1 & 2 & 2 & 2 & 0 & 2 & 0 \\
	0 & 0 & 0 & 0 & 0 & 0 & 0 & 0 & 0 \\
	0 & 0 & 0 & 0 & 0 & 0 & 0 & 0 & 0
	\end{array}
	\right]_{6\times 9}
	$$ 
	Thus, the defining contrast subgroup for treatment components is \( G_t = \{I, 34^25^26, 12^23^24^256, 12^235^2, 12^246^2\} \), and \( U(D_b)= \{12, 13, 14, 34^2, 13^24^2, 24^2, 23^2, 12^24^2, 12^23^2\} \).  Algorithm \ref{alg1} then yields the defining contrast subgroup \( G_b \) for the MBV design. Subsequently, \( V_2(G_t) \) and \( Y_q \) are obtained from  the algorithm. Each component is selected from \( Y_q \) one at a time, and its modulo 3 addition with each column vector of \( V_2(G_t) \) is computed. For example, when \( \mathbf{y_q} = (1, 0, 0, 0, 0, 0)^\top \), the resulting alias set matrix \( A_1 \) is:
	$$
	A_1= 
	\begin{bmatrix}
	1 & 1 & 2 & 2 & 2 & 1 & 0 & 0 & 0 \\
	0 & 0 & 2 & 2 & 2 & 0 & 1 & 1 & 1 \\
	0 & 1 & 2 & 1 & 0 & 2 & 1 & 2 & 0 \\
	0 & 2 & 2 & 0 & 1 & 1 & 1 & 0 & 2 \\
	0 & 2 & 1 & 2 & 0 & 1 & 2 & 1 & 0 \\
	0 & 1 & 1 & 0 & 2 & 2 & 2 & 0 & 1
	\end{bmatrix}
	_{6\times 9}.
	$$
	Hence, the corresponding alias set is obtained as \(1 = 134^25^26 = 1^22^23^24^256 = 1^22^235^2 = 1^22^246^2 = 13^2456^2 = 2345^26^2 = 23^25 = 24^26\), and the corresponding vector is \(T_1 = (1, 5, 6, 4, 4, 5, 5, 3, 3)^\top\). Repeating this process over the entire \(Y_q\) matrix allows for the construction of the complete B-WDM.
	After classifying \( T_l \) into the four categories \( g, b, m, \) and \( \phi \), the complete results of the \(\text{B}^2\text{-ACNP}\) are computed according to the algorithm and listed in Table \ref{tab1}. For brevity,  all \(^{\#}{}_{i}^*C_j^{(k)}\) are zero are omitted from the table.
\end{example}

\begin{table}[htbp]
	\centering
	\caption{Complete B$^2$-ACNP of the $3^{6-2}:3^3$ design.}\label{tab1}
	\begin{tabular*}{\hsize}{@{}@{\extracolsep{\fill}}llllllllll@{}}
		\hline 
		Class&\( ^{\#}{}_{i}^*C_j^{(k)} \)& $j=0$ & $j=1$ & $j=2$ & $j=3$ & $j=4$ & $j=5$ & $j=6$ \\ 
		\hline 
		&$i=0$ & 1 & 1 & 1 & 1 & $0^3$, 1 & 1 & 0, 1\\ 
		$g-$class  &$i=4$ & 0, 3 & 3 & 3 & 3 & $0^2$, 3 & 3 & 0, 3\\  
		&$i=6$ & 0, 1 & 1 & 1 & 1 & $0^3$, 1 & 1 & 1\\   
		\hline
		&$i=2$ & 13 & 13 & 3, 10 & 10, $0^{3}$, 3 & 3, $0^{2}$, 10 & $0^{2}$, 3, 0, 10 & 10, 0, 3\\
		&$i=3$ & 14 & 14 & 2, 12 & 0, 2, 0, 12 & 12, $0^{5}$, 2 & 2, 0, 12 & 0, 2, 12\\ 
		$b-$class &$i=4$ & 21 & 21 & 6, 0, 15 & 15, 0, 6 & $0^{2}$, 15, $0^{2}$, 6 & 6, $0^{3}$, 15 & 15, 6\\  
		&$i=5$ & 26 & 26 & 0, 6, 20 & 20, $0^{3}$, 6 & 6, $0^{2}$, 20 & 0, 6, 0, 20 & 20, 0, 6\\
		&$i=6$ & 7 & 7 & 1, 6 & $0^{2}$, 1, 0, 6 & 6, $0^{5}$, 1 & 1, 0, 6 & 1, 6\\  
		\hline
		&$i=1$ & 6 & 6 & 6 & $0^{2}$, 6 & $0^{2}$, 6 & $0^{3}$, 6 & 0, 6\\ 
		&$i=3$ & 12 & 0, 12 & 12 & 0, 12 & $0^{2}$, 12 & $0^{3}$, 12 & 0, 12\\
		$m-$class &$i=4$ & 12 & 0, 12 & 12 & $0^{2}$, 12 & 0, 12 & $0^{3}$, 12 & 0, 12\\  
		&$i=5$ & 18 & 0, 18 & 18 & $0^{2}$, 18 & $0^{2}$, 18 & $0^{2}$, 18 & 0, 18\\
		&$i=6$ & 6 & 0, 6 & 6 & $0^{2}$, 6 & $0^{2}$, 6 & $0^{3}$, 6 & 6\\    
		\hline
		&$i=2$ & 17 & 17 & 12, 2, 3 & 5, 0, 12 & $0^3$, 17 & 3, 0, 12, 0, 2 & 2, 12, 0, 3\\
		&$i=3$ & 54 & 54 & 30, 24 & 0, 30, 24 & $0^3$, 48, $0^2$, 6 & 6, 0, 24, 24 & 24, 30\\ 
		$\phi-$class &$i=4$ & 84 & 84 & 42, 36, 3, 3 & 6, 0, 54, 24 & $0^2$, 66, $0^2$, 18 & 21, 0, 36, 24, 3 & 27, 54, 0, 3\\  
		&$i=5$ & 52 & 52 & 24, 24, 4 &4, 0, 24, 24 & $0^3$, 52 & 0, 24, 24, 4 & 28, 24\\
		&$i=6$ & 18 & 18 & 3, 12, 0, 3 & 3, 0, 15 & $0^3$, 15, $0^2$, 3 & 6, 0, 12 & 15, 0, 3\\  
		\hline
	\end{tabular*}
	\label{tab1}
\end{table}

\subsection{Visualization of the $\text{B}^2\text{-ACNP}$}
The B\(^2\)-ACNP is closely related with \((i, j, k, {^{\#}{}_{i}^*C_j^{(k)}})\). For ease of interpretation, we generate three-dimensional scatter plots and use a color map to represent the magnitude of \({^{\#}{}_{i}^*C_j^{(k)}}\). Projections onto the \(ij\)-plane, \(ik\)-plane, and \(jk\)-plane are also provided. The degree of aliasing is highest when the component order is close to \((s-1)n/s\). This phenomenon can be explained by the total number of \(i\)th-order components, \(N_i = \binom{n}{i}(s-1)^{i-1}\). Consider the ratio
\[
R_i = \frac{N_i}{N_{i-1}} = \frac{n-i+1}{i}(s-1).
\]
When \( R_i > 1 \), \( N_i \) increases with \( i \); when \( R_i < 1 \), \( N_i \) decreases. Setting \( R_i \approx 1 \) yields \( i \approx (s-1)n/s \), the order at which the number of components is maximized, and where the potential for aliasing tends to be largest.

\begin{example}\label{exm6}
	For a \(3^{9-4}:3^{2}\) design with the defining relations  
	\(I = 123456^{2} = 12^{2}3^{2}47^{2} = 12^{2}4^{2}58^{2} = 23^{2}4^{2}59^{2} = 123^{2}5^{2}b_{1} = 134^{2}5^{2}b_{2}\),  
	its aliasing pattern is shown in Fig. \ref{fig5}. As predicted by the formula for \(s=3\), the ``peak" of aliasing occurs near \(i \approx 2n/3\). The main plot and the \(jk\)-plane projection exhibit a unimodal arch-shaped distribution.
\end{example}

These visualizations clearly illustrate aliasing among the four component classes \(g, b, m, \phi\), facilitating an intuitive understanding of the \(\text{B}^2\text{-ACNP}\) and enabling comparison of aliasing severity across different designs.

\begin{figure}[!hpt]
	\centering 
	
	\begin{subfigure}[b]{0.48\textwidth}
		\centering
		\includegraphics[width=\textwidth]{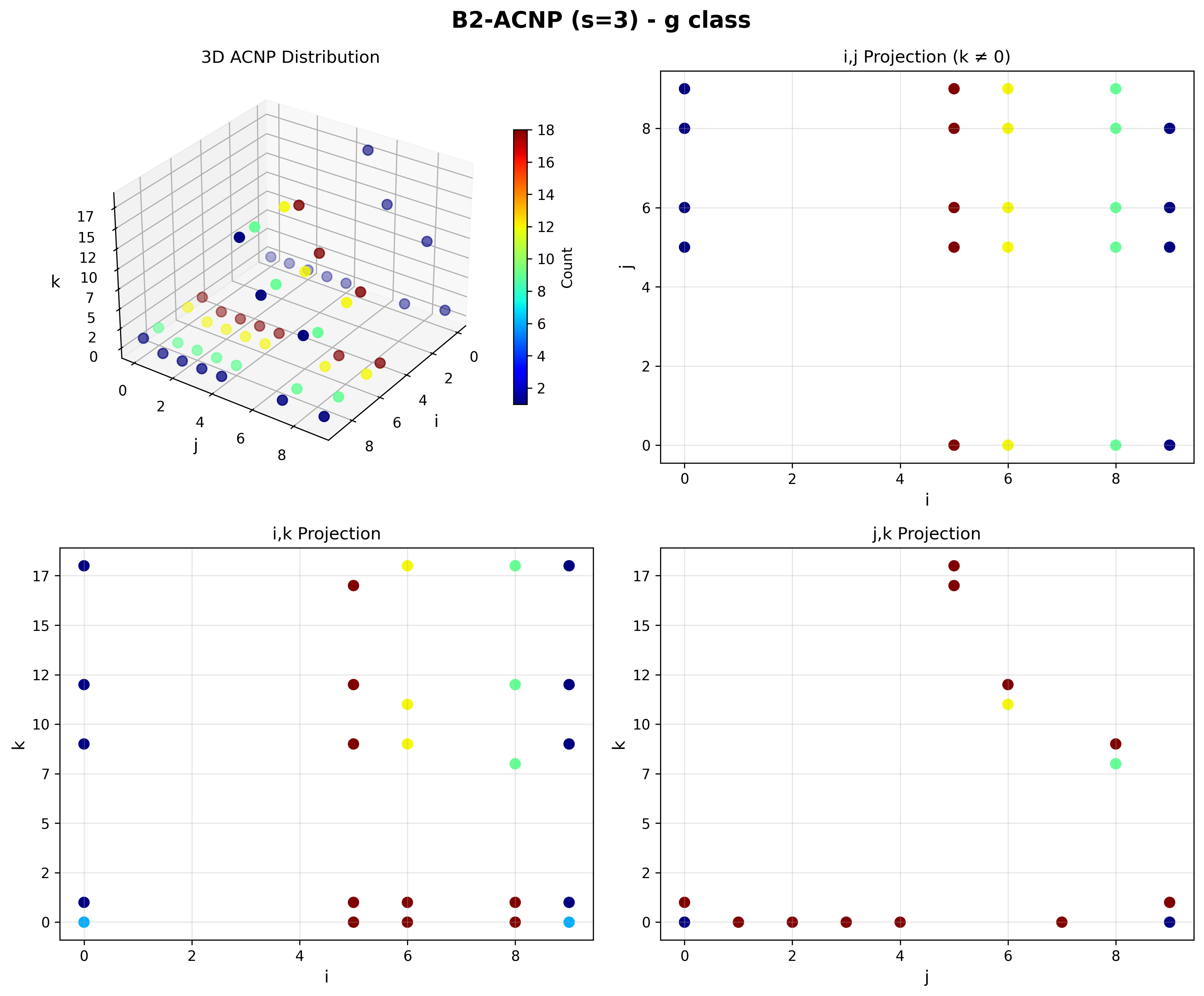} 
		\caption{$\text{B}^2\text{-ACNP}$($g-$class)}
		\label{fig1} 
	\end{subfigure}
	\hfill
	\begin{subfigure}[b]{0.48\textwidth}
		\centering 
		\includegraphics[width=\textwidth]{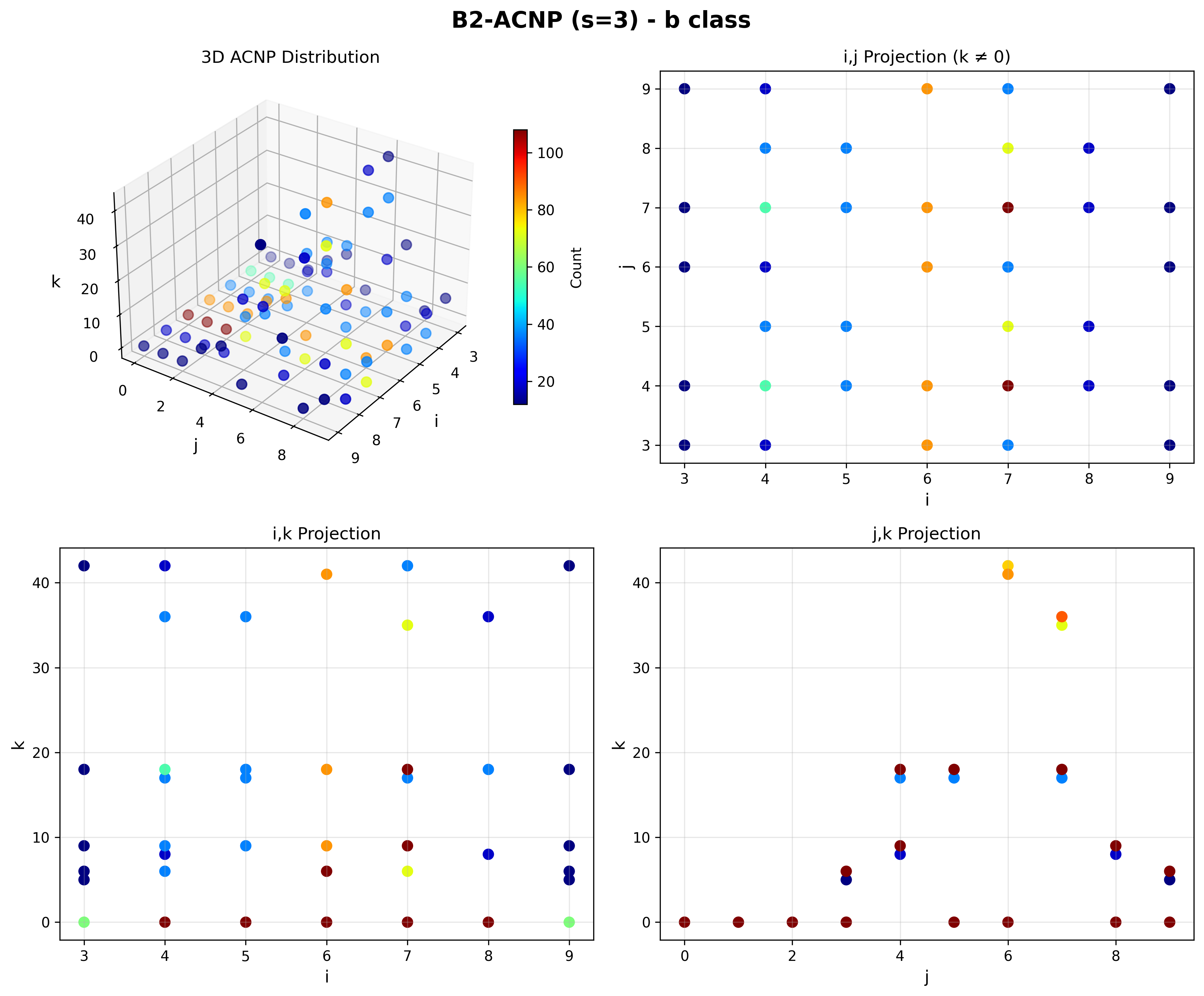} 
		\caption{ $\text{B}^2\text{-ACNP}$($b-$class) }
		\label{fig2} 
	\end{subfigure}
	
	\begin{subfigure}[b]{0.48\textwidth}
		\centering
		\includegraphics[width=\textwidth]{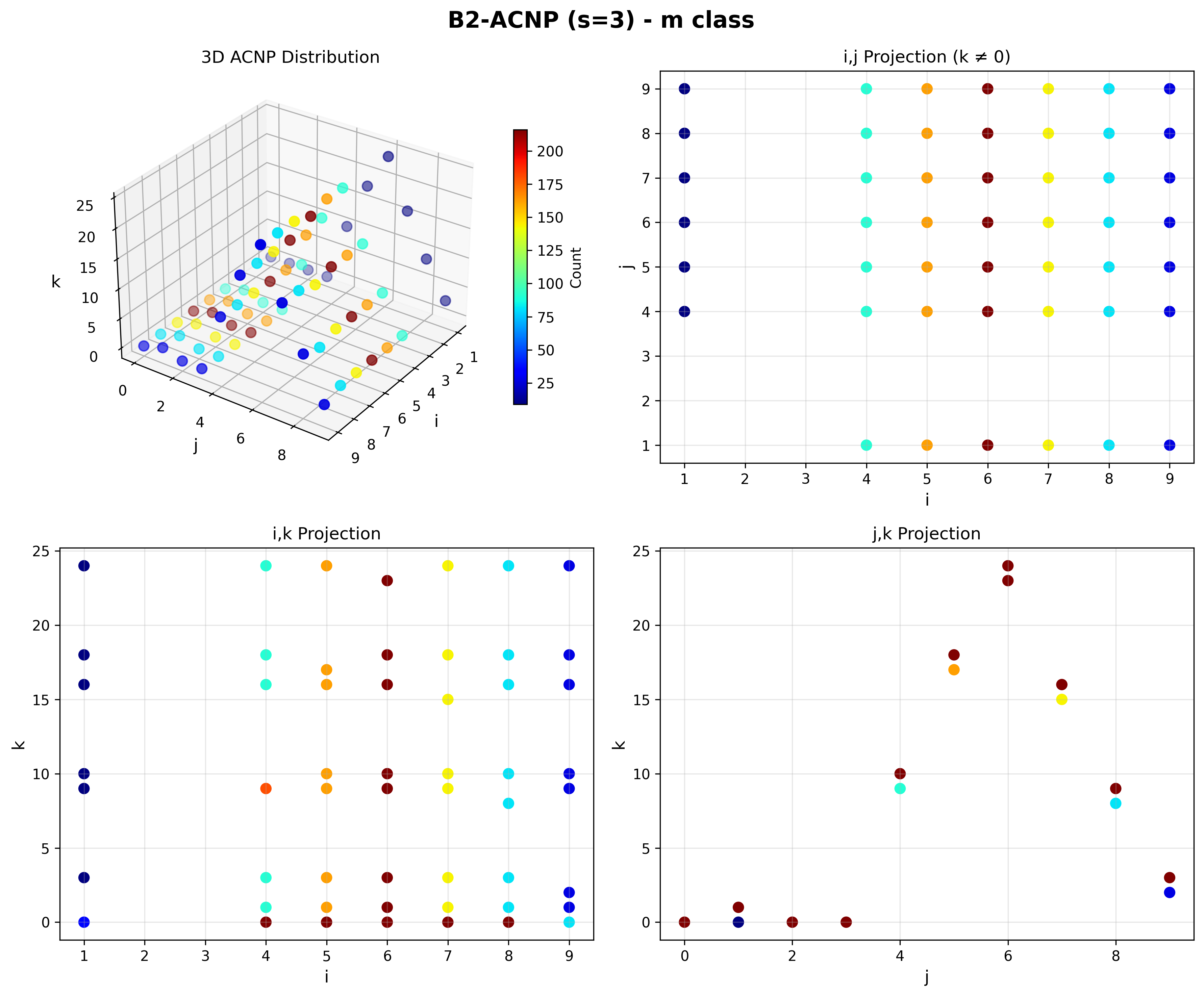} 
		\caption{$\text{B}^2\text{-ACNP}$($m-$class)}
		\label{fig3} 
	\end{subfigure}
	\hfill
	\begin{subfigure}[b]{0.48\textwidth}
		\centering 
		\includegraphics[width=\textwidth]{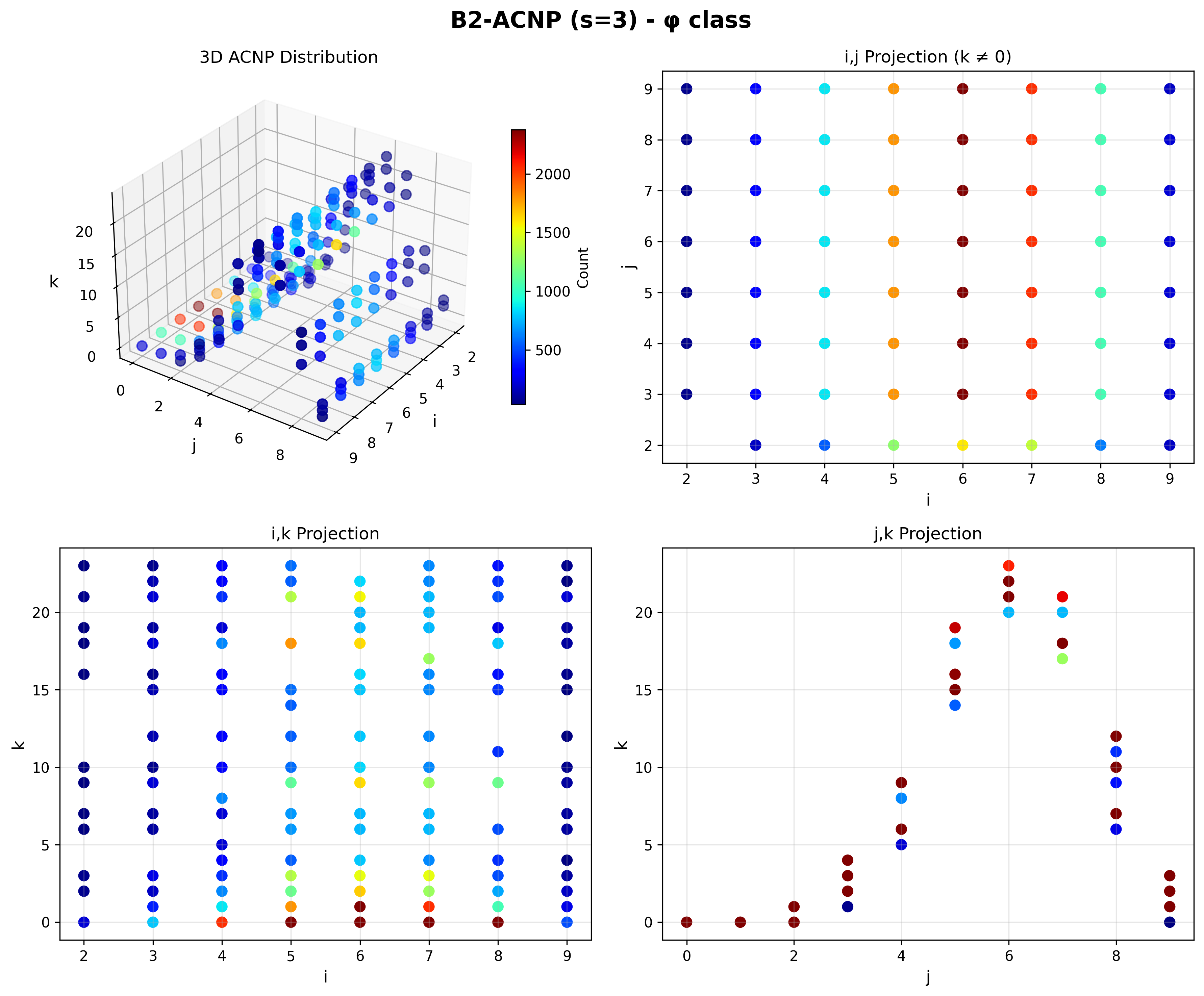} 
		\caption{ $\text{B}^2\text{-ACNP}$($\phi-$class) }
		\label{fig4} 
	\end{subfigure}
	\caption{Confounding pattern for three-level designs.}
	\label{fig5}
\end{figure}

\section{Applications of the \(\text{B}^2\text{-ACNP}\)}\label{sec5}
In this section, we first apply the \(\text{B}^2\text{-ACNP}\) to search \(\text{B}^2\text{-GMC}\) designs and then two real studies. Based on the computational algorithm for the \(\text{B}^2\text{-ACNP}\) proposed in Section \ref{sec4}, the \(\text{B}^2\text{-GMC}\) designs can be systematically searched for using a search algorithm in conjunction with it. 

\textit{Searching \(\text{B}^2\text{-GMC}\) designs.} 
Consider a \(3^{6-2}:3^3\) MBV design. Let \(1,2,3,4\) denote the four independent columns. The remaining columns \(d_1, d_2\) and the block components \(b_1, b_2, b_3\) are then selected from the set \(H_{43} \setminus \{1,2,3,4\}\). The total number of possible combinations is \(\binom{36}{5}\). Two designs are considered isomorphic if their \(\text{B}^2\text{-ACNP}\) are exactly identical and they have the same ranking under the \(\text{B}^2\text{-GMC}\) criterion. If the focus is only on the aliasing structure between main effects and two-factor interactions, Table \ref{tab2} lists the top ten non-isomorphic designs ranked according to the \(\text{B}^2\text{-GMC}\) criterion.

\begin{table}[htbp]
	\centering
	\caption{Non-isomorphic $3^{6-2}:3^3$ designs.}
	\renewcommand{\arraystretch}{1.5} 
	\begin{tabular*}{\hsize}{@{}@{\extracolsep{\fill}}ccccc@{}}  
		\hline
		\multirow{2}{*}{Design} & \multirow{2}{*}{Additional columns} & \multirow{2}{*}{Block columns} & \multicolumn{2}{c}{ \(\text{B}^2\text{-ACNP}\)} \\  \cline{4-5}  
		& & & $^{\#}{}_{1}^mC_2$ & $^{\#}{}_{2}^\phi C_2$ \\
		\hline
		$3^{6-2}:3^3.1$ & $123,~12^24$ & $12,~134,~23^24^2$ & (6) & (18,10)\\
		$3^{6-2}:3^3.2$ & $123,~12^24$ & $13,~234,~123^24$ & (6) & (17,10)\\
		$3^{6-2}:3^3.3$ & $123,~12^24$ & $12,~13^24,~12^23^24^2$ & (6) & (17,8)\\
		$3^{6-2}:3^3.4$ & $123,~12^24$ & $12,~14,~24^2$ & (6) & (17,6)\\
		$3^{6-2}:3^3.5$ & $123,~12^24$ & $12^23,~234,~13^24$ & (6) & (16,12)\\
		$3^{6-2}:3^3.6$ & $123,~12^24$ & $12,~34,~1234$ & (6) & (16,10)\\
		$3^{6-2}:3^3.7$ & $123,~12^24$ & $12,~13,~23^2$ & (6) & (16,8)\\
		$3^{6-2}:3^3.8$ & $123,~12^24$ & $12,~14,~12^234^2$ & (6) & (16,6)\\
		$3^{6-2}:3^3.9$ & $123,~124$ & $12,~134,~23^24^2$ & (6) & (15,12)\\
		$3^{6-2}:3^3.10$ & $123,~12^24$ & $12,~12^23,~13^24$ & (6) & (15,8)\\
		\hline
	\end{tabular*}
	\label{tab2}
\end{table}
In fact, since the calculation of the complete \(\text{B}^2\text{-ACNP}\) has become feasible, researchers can, based on specific experimental requirements, emphasize specific \(^{\#}{}_{i}^*C_j^{(k)}\) and use them to perform custom rankings of candidate designs, thereby enhancing the flexibility of the criterion in practical applications. For example, if there is a need to specifically control the aliasing of block components, the \(^{\#}{}_{i}^bC_j^{(k)}\) can be included in the ranking basis.
The following provides examples of applications of the \(\text{B}^2\text{-GMC}\) criterion in practical scenarios.

 \textit{Experiment of truck leaf springs (Wu and Hamada (\citeyear{wu2011experiments})).} To fairly compare the effectiveness of the \(\text{B}^1\text{-GMC}\) and \(\text{B}^2\text{-GMC}\) criteria in MBV designs under $Kind~2$, this study employs a simulation method based on a five-factor two-level design on the truck leaf spring heat treatment process improvement experiment. The following regression model is established to generate simulated data:
	\[
	\begin{aligned}
	y = \ 7.636 + 0.1106A + 0.0881B - 0.1298C + 0.0519D 
	- 0.0423AC - 0.0827BC + \sum_{b=1}^{8} \gamma_b I_b + \varepsilon,
	\end{aligned}
	\]
	where \(A\) through \(E\) are the coded values of the factors (taking ±1), \(\gamma_b\) represents the random effect of the \(b\)th block, following a \(N(0, 0.5^2)\) distribution, and \(\varepsilon\) is the random error, following a \(N(0, 0.5^2)\) distribution. The model parameters are derived from a real \(2^5\) full factorial experiment and its regression analysis. In this example, block random effects \(\gamma_b\) are additionally introduced to simulate the potential variability that MBVs may introduce in practical experiments.
	
	Based on the search algorithms from Section \ref{sec4}, the optimal designs under the \(\text{B}^1\text{-GMC}\) and \(\text{B}^2\text{-GMC}\) criteria for the \(2^{5-1}:2^3\) MBV design were obtained, denoted as \(D_a\) and \(D_b\), respectively. Their defining relations are:
	\[
	\begin{aligned}
	D_a: I = ACDE = ABb_1 = BCb_2 = CDb_3, \quad
	D_b: I = ABCDE = ABb_1 = ACDb_2 = BCDb_3,
	\end{aligned}
	\]
	where \(b_1, b_2, b_3\) represent the three block main effects. These two design matrices were applied to the above model, and response data were generated via simulation. The complete design matrices are shown in Table \ref{tab3}.
	\begin{table}
		\centering
		\caption{Design matrix and response data for heat treatment simulation.}
        \begin{tabular*}{\hsize}{@{}@{\extracolsep{\fill}}ccrrrrrcrrrrrc@{}}
			\hline 
			Run & Block & $A_1$ & $B_1$ & $C_1$ & $D_1$ & $E_1$ & Readout &  $A_2$ & $B_2$ & $C_2$ & $D_2$ & $E_2$ & Readout  \\ 
			\hline 
			1 & 1 & 1 & -1 & 1 & -1 & -1 & 7.13 & -1 & -1 & -1 & 1 & -1 & 7.10 \\
			2 & 1 & -1 & 1 & -1 & 1 & 1 & 6.62 & -1 & -1 & 1 & -1 & -1 & 5.99 \\ 
			3 & 2 & -1 & -1 & 1 & -1 & 1 & 7.52 & 1 & 1 & -1 & -1 & 1 & 8.18 \\ 
			4 & 2 & 1 & 1 & -1 & 1 & -1 & 8.28 & 1 & 1 & 1 & 1 & 1 & 7.94 \\ 
			5 & 3 & -1 & 1 & 1 & -1 & 1 & 7.44 & -1 & -1 & -1 & -1 & 1 & 7.61 \\ 
			6 & 3 & 1 & -1 & -1 & 1 & -1 & 7.37 & -1 & -1 & 1 & 1 & 1 & 7.05 \\ 
			7 & 4 & 1 & 1 & 1 & -1 & -1 & 7.62 & 1 & 1 & -1 & 1 & -1 & 8.06 \\ 
			8 & 4 & -1 & -1 & -1 & 1 & 1 & 7.25 & 1 & 1 & 1 & -1 & -1 & 7.28 \\ 
			9 & 5 & -1 & 1 & -1 & -1 & -1 & 8.28 & -1 & 1 & -1 & 1 & 1 & 8.38 \\ 
			10 & 5 & 1 & -1 & 1 & 1 & 1 & 6.69 & -1 & 1 & 1 & -1 & 1 & 6.29 \\ 
			11 & 6 & 1 & 1 & -1 & -1 & 1 & 7.99 & 1 & -1 & -1 & -1 & -1 & 7.65 \\ 
			12 & 6 & -1 & -1 & 1 & 1 & -1 & 8.05 & 1 & -1 & 1 & 1 & -1 & 8.35 \\ 
			13 & 7 & 1 & -1 & -1 & -1 & 1 & 8.15 & -1 & 1 & -1 & -1 & -1 & 8.35 \\ 
			14 & 7 & -1 & 1 & 1 & 1 & -1 & 7.28 & -1 & 1 & 1 & 1 & -1 & 7.28 \\ 
			15 & 8 & -1 & -1 & -1 & -1 & -1 & 7.54 & 1 & -1 & -1 & 1 & 1 & 7.78 \\ 
			16 & 8 & 1 & 1 & 1 & 1 & 1 & 7.30 & 1 & -1 & 1 & -1 & 1 & 7.19 \\  
			\hline 
		\end{tabular*}
		\label{tab3}
	\end{table}
    
	Based on the half-normal probability plots (Fig. \ref{fig25}), a preliminary screening for potentially significant treatment components was conducted for the two designs. Subsequently, regression analysis was performed on the screened effects at the 0.05 significance level, yielding regression models for designs \(D_a\) and \(D_b\). Their coefficients of determination were \(R_1^2 = 0.8849\) and \(R_2^2 = 0.9391\), respectively. However, in design \(D_a\), the treatment components \(AB\) and \(AC\) are aliased with block components. Consequently, its regression model is actually confounded with the block variables, making it unreliable to determine whether these two effects are truly significant:
	\[
	\begin{aligned}
	\hat{y}_1 &= 7.5319-0.1531C-0.1769D-0.1619E+0.1619AB-0.2281AC+0.2031ABD,\\
	\hat{y}_2 &= 7.5300+0.2738A+0.1900B-0.3587C+0.2125D+0.2450AC-0.1637BC+0.2712CD.
	\end{aligned}
	\]
	
	\begin{figure}[!hpt]
		\centering 
		\begin{subfigure}[b]{0.48\textwidth}
			\centering
			\includegraphics[width=\textwidth]{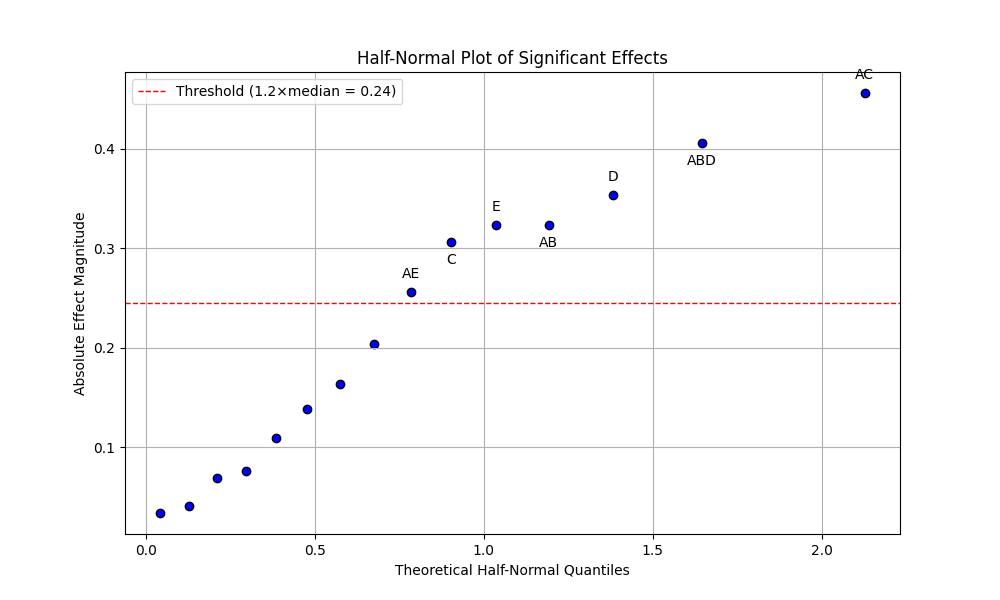} 
			\caption{Half-normal probability plot of \(D_a\)}
			\label{fig23} 
		\end{subfigure}
		\hfill
		\begin{subfigure}[b]{0.48\textwidth}
			\centering 
			\includegraphics[width=\textwidth]{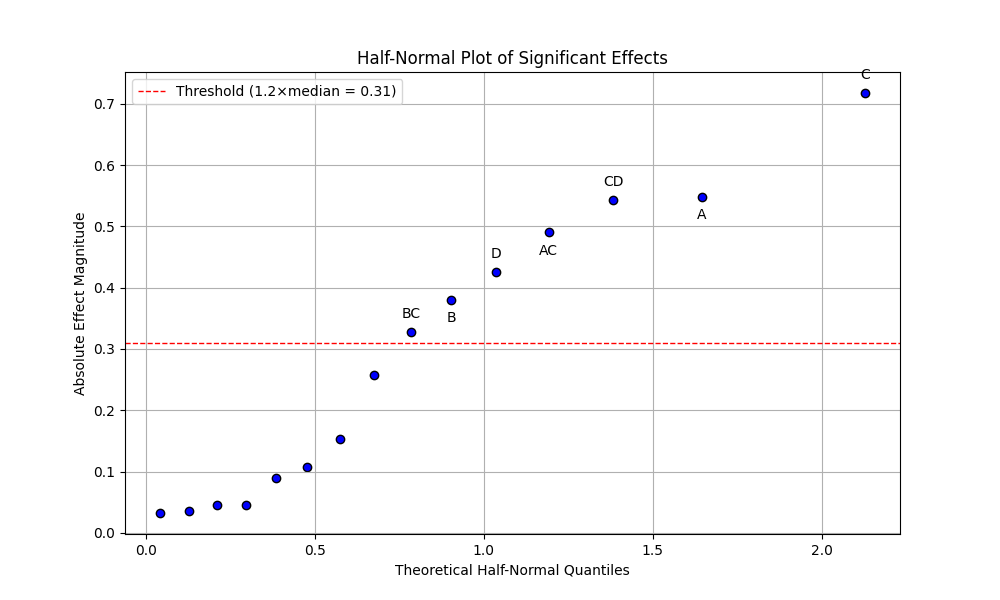} 
			\caption{ Half-normal probability plot of \(D_b\) }
			\label{fig24} 
		\end{subfigure}
		\caption{Half-normal probability plot for the experiment.}
		\label{fig25}
	\end{figure}
	
	The above results indicate that for experiments with MBVs, the optimal design selected by the \(\text{B}^2\text{-GMC}\) criterion exhibits superior model fitting capability. This difference can be explained by the aliasing structures of the two designs: The \(\text{B}^1\text{-ACNP}\) for \(D_a\) is \(^{\#}{}_{1}^mC_2 = (5), {^{\#}{}_{2}^\phi C_2} = (0)\), whereas the \(\text{B}^2\text{-ACNP}\) for \(D_b\) is \(^{\#}{}_{1}^mC_2 = (5), {^{\#}{}_{2}^\phi C_2} = (7)\). This implies that in design \(D_a\), all 2fis are confounded with block effects and thus nonestimable, while in design \(D_b\), only three 2fis are nonestimable. Consequently, the model fit for design \(D_a\) is significantly weaker than that for design \(D_b\).
	
	According to the conclusion of Zhang et al. (\citeyear{zhang_optimal_2011}), in $Kind~2$ two-level designs with MBVs, the number of clear main effects and clear 2fis in a \(\text{B}^2\text{-GMC}\) design is always greater than that in a \(\text{B}^1\text{-GMC}\) design. Combined with Theorem 4, this conclusion can be extended to high-resolution designs with MBVs at \(s\) levels. Therefore, in the context of MBVs, the model obtained from a \(\text{B}^2\text{-GMC}\) design is generally superior to that from a \(\text{B}^1\text{-GMC}\) design, and the simulation results of this study align with the theoretical expectation.

\textit{Drug experiments (Jaynes et al., \citeyear{jaynes_application_2013})}
	Consider an experiment investigating the effects of six antiviral drugs on Herpes simplex virus type 1 (HSV-1). The experiment involves six three-level factors: Interferon-alpha (A), Interferon-beta (B), Interferon-gamma (C), Ribavirin (D), Acyclovir (E), and TNF-alpha (F). Since the distribution of viral infection is skew-normal, \( y = \log_{10}(\text{readout}) \) is chosen as the response variable. The experiment is divided into three blocks, denoted by the block variable \( b_1 \).
	The experiment employs a \( 3^{6-2}:3^1 \) \(\text{B}^2\text{-GMC}\) blocked design with the defining relation \( I = ABCDE^2 = AB^2CF^2 = AC^2Db_1 \). The design matrix of this experiment and the corresponding data are presented in Table \ref{tab4}.

\begin{table}[htbp]
	\centering
\caption{Design matrix of the HSV-1 experiment.}\label{tab4}
	\footnotesize 
	\begin{tabular}{cccccccccccccccccc}
		\toprule
		\textbf{Run} & \textbf{A} & \textbf{B} & \textbf{C} & \textbf{D} & \textbf{E} & \textbf{F} & \textbf{Block} & \textbf{Readout} & 
		\textbf{Run} & \textbf{A} & \textbf{B} & \textbf{C} & \textbf{D} & \textbf{E} & \textbf{F} & \textbf{Block} & \textbf{Readout} \\
		\midrule
		1 & 0 & 0 & 0 & 0 & 0 & 0 & 0 & 49.1 & 42 & 1 & 2 & 1 & 1 & 2 & 0 & 1 & 2.3 \\
		2 & 0 & 1 & 0 & 0 & 1 & 2 & 0 & 37.2 & 43 & 2 & 0 & 2 & 1 & 2 & 1 & 1 & 14 \\
		3 & 0 & 2 & 0 & 0 & 2 & 1 & 0 & 39.9 & 44 & 2 & 1 & 2 & 1 & 0 & 0 & 1 & 4.1 \\
		4 & 1 & 0 & 1 & 0 & 2 & 2 & 0 & 28.7 & 45 & 2 & 2 & 2 & 1 & 1 & 2 & 1 & 2.5 \\
		5 & 1 & 1 & 1 & 0 & 0 & 1 & 0 & 41.8 & 46 & 2 & 0 & 0 & 2 & 1 & 2 & 1 & 2.3 \\
		6 & 1 & 2 & 1 & 0 & 1 & 0 & 0 & 50.6 & 47 & 2 & 1 & 0 & 2 & 2 & 1 & 1 & 1.3 \\
		7 & 2 & 0 & 2 & 0 & 1 & 1 & 0 & 24.4 & 48 & 2 & 2 & 0 & 2 & 0 & 0 & 1 & 2 \\
		8 & 2 & 1 & 2 & 0 & 2 & 0 & 0 & 18.2 & 49 & 0 & 0 & 1 & 2 & 0 & 1 & 1 & 3 \\
		9 & 2 & 2 & 2 & 0 & 0 & 2 & 0 & 24.3 & 50 & 0 & 1 & 1 & 2 & 1 & 0 & 1 & 1.6 \\
		10 & 2 & 0 & 0 & 1 & 0 & 2 & 0 & 23.7 & 51 & 0 & 2 & 1 & 2 & 2 & 2 & 1 & 0.9 \\
		11 & 2 & 1 & 0 & 1 & 1 & 1 & 0 & 6.2 & 52 & 1 & 0 & 2 & 2 & 2 & 0 & 1 & 1.6 \\
		12 & 2 & 2 & 0 & 1 & 2 & 0 & 0 & 6.2 & 53 & 1 & 1 & 2 & 2 & 0 & 2 & 1 & 3.9 \\
		13 & 0 & 0 & 1 & 1 & 2 & 1 & 0 & 6.8 & 54 & 1 & 2 & 2 & 2 & 1 & 1 & 1 & 0.8 \\
		14 & 0 & 1 & 1 & 1 & 0 & 0 & 0 & 8.9 & 55 & 2 & 0 & 0 & 0 & 2 & 2 & 2 & 14.5 \\
		15 & 0 & 2 & 1 & 1 & 1 & 2 & 0 & 7.6 & 56 & 2 & 1 & 0 & 0 & 0 & 1 & 2 & 24.5 \\
		16 & 1 & 0 & 2 & 1 & 1 & 0 & 0 & 5.2 & 57 & 2 & 2 & 0 & 0 & 1 & 0 & 2 & 30.2 \\
		17 & 1 & 1 & 2 & 1 & 2 & 2 & 0 & 6 & 58 & 0 & 0 & 1 & 0 & 1 & 1 & 2 & 50.5 \\
		18 & 1 & 2 & 2 & 1 & 0 & 1 & 0 & 7.6 & 59 & 0 & 1 & 1 & 0 & 2 & 0 & 2 & 34.1 \\
		19 & 1 & 0 & 0 & 2 & 0 & 1 & 0 & 4.2 & 60 & 0 & 2 & 1 & 0 & 0 & 2 & 2 & 39.4 \\
		20 & 1 & 1 & 0 & 2 & 1 & 0 & 0 & 2.4 & 61 & 1 & 0 & 2 & 0 & 0 & 0 & 2 & 31 \\
		21 & 1 & 2 & 0 & 2 & 2 & 2 & 0 & 1.8 & 62 & 1 & 1 & 2 & 0 & 1 & 2 & 2 & 17.2 \\
		22 & 2 & 0 & 1 & 2 & 2 & 0 & 0 & 4.4 & 63 & 1 & 2 & 2 & 0 & 2 & 1 & 2 & 11.8 \\
		23 & 2 & 1 & 1 & 2 & 0 & 2 & 0 & 3.7 & 64 & 1 & 0 & 0 & 1 & 2 & 1 & 2 & 4.8 \\
		24 & 2 & 2 & 1 & 2 & 1 & 1 & 0 & 2.7 & 65 & 1 & 1 & 0 & 1 & 0 & 0 & 2 & 4.6 \\
		25 & 0 & 0 & 2 & 2 & 1 & 2 & 0 & 3.5 & 66 & 1 & 2 & 0 & 1 & 1 & 2 & 2 & 4.5 \\
		26 & 0 & 1 & 2 & 2 & 2 & 1 & 0 & 1.4 & 67 & 2 & 0 & 1 & 1 & 1 & 0 & 2 & 3.3 \\
		27 & 0 & 2 & 2 & 2 & 0 & 0 & 0 & 2.5 & 68 & 2 & 1 & 1 & 1 & 2 & 2 & 2 & 3.7 \\
		28 & 1 & 0 & 0 & 0 & 1 & 1 & 1 & 11.8 & 69 & 2 & 2 & 1 & 1 & 0 & 1 & 2 & 2.5 \\
		29 & 1 & 1 & 0 & 0 & 2 & 0 & 1 & 14.5 & 70 & 0 & 0 & 2 & 1 & 0 & 2 & 2 & 14.2 \\
		30 & 1 & 2 & 0 & 0 & 0 & 2 & 1 & 14 & 71 & 0 & 1 & 2 & 1 & 1 & 1 & 2 & 2.7 \\
		31 & 2 & 0 & 1 & 0 & 0 & 0 & 1 & 19.3 & 72 & 0 & 2 & 2 & 1 & 2 & 0 & 2 & 2.5 \\
		32 & 2 & 1 & 1 & 0 & 1 & 2 & 1 & 8.6 & 73 & 0 & 0 & 0 & 2 & 2 & 0 & 2 & 1.6 \\
		33 & 2 & 2 & 1 & 0 & 2 & 1 & 1 & 10 & 74 & 0 & 1 & 0 & 2 & 0 & 2 & 2 & 3.2 \\
		34 & 0 & 0 & 2 & 0 & 2 & 2 & 1 & 8.6 & 75 & 0 & 2 & 0 & 2 & 1 & 1 & 2 & 1.4 \\
		35 & 0 & 1 & 2 & 0 & 0 & 1 & 1 & 29.9 & 76 & 1 & 0 & 1 & 2 & 1 & 2 & 2 & 1.9 \\
		36 & 0 & 2 & 2 & 0 & 1 & 0 & 1 & 10.9 & 77 & 1 & 1 & 1 & 2 & 2 & 1 & 2 & 1.6 \\
		37 & 0 & 0 & 0 & 1 & 1 & 0 & 1 & 4.7 & 78 & 1 & 2 & 1 & 2 & 0 & 0 & 2 & 2.3 \\
		38 & 0 & 1 & 0 & 1 & 2 & 2 & 1 & 3.4 & 79 & 2 & 0 & 2 & 2 & 0 & 1 & 2 & 2.6 \\
		39 & 0 & 2 & 0 & 1 & 0 & 1 & 1 & 6.6 & 80 & 2 & 1 & 2 & 2 & 1 & 0 & 2 & 18.2 \\
		40 & 1 & 0 & 1 & 1 & 0 & 2 & 1 & 4.8 & 81 & 2 & 2 & 2 & 2 & 2 & 2 & 2 & 2.6 \\
		41 & 1 & 1 & 1 & 1 & 1 & 1 & 1 & 2.9 &  &  &  &  &  &  &  &  &  \\
		\bottomrule
	\end{tabular}
\end{table}
	
	Based on the \(\text{B}^2\text{-ACNP}\) algorithm from Section \ref{sec4}, the complete \(\text{B}^2\text{-ACNP}\) for this design and its collective visualization results are as follows:
	
		\begin{figure}[!hpt]
		\centering 
		\begin{subfigure}[b]{0.48\textwidth}
			\centering
			\includegraphics[width=\textwidth]{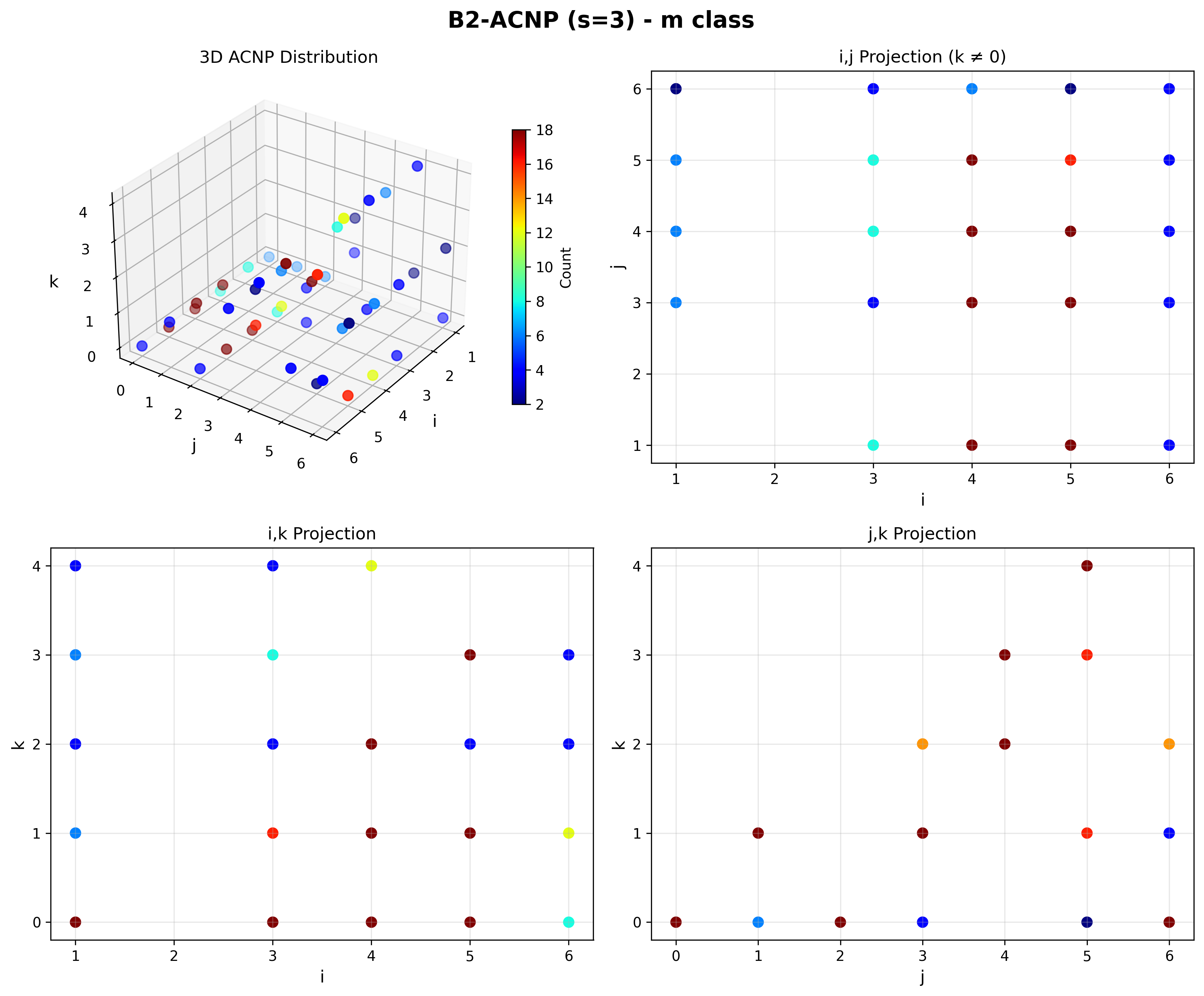} 
			\caption{$m-class$}
			\label{fig26} 
		\end{subfigure}
		\hfill
		\begin{subfigure}[b]{0.48\textwidth}
			\centering 
			\includegraphics[width=\textwidth]{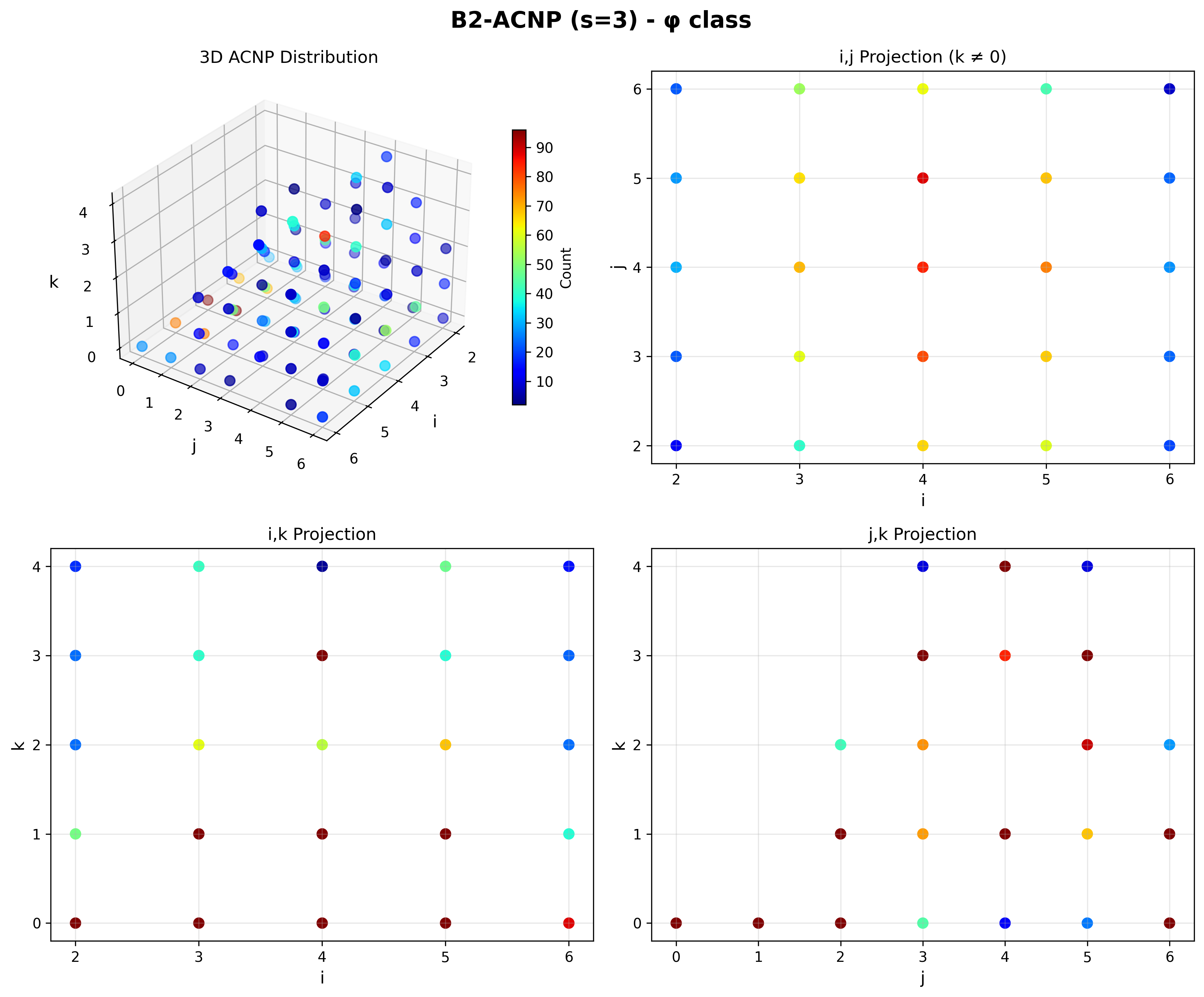} 
			\caption{ $\phi-class$ }
			\label{fig27} 
		\end{subfigure}
		\caption{Confounding pattern for the $3^{6-2}:3^1$ designs.}
		\label{fig28}
	\end{figure}

	The \(3^{6-2}:3^1\) design includes 6 main effects, 30 2fi components (2fic's), and 328 higher-order interaction components. This study focuses on the main effects, their quadratic terms, and 2fis, incorporating one block component. The following regression model is established to analyze the influence of each component:
	\[
	\begin{aligned}
	y = \beta_0 + \sum_{i=1}^6 \beta_i x_i + \sum_{i=1}^6 \beta_{ii} x_i^2 +  \sum_{1 \leq i < j \leq 6} \beta_{ij} x_i x_j + \sum_{1 \leq i < j \leq 6} \beta_{ijj} x_i x_j^2 + \gamma_1 b_1  + \varepsilon,
	\end{aligned}
	\]
	where \( y \) is the observed value, \( x_i(i=1,\ldots,6) \) represent the coded values for the six main effects of the factors, with the codes (0,1,2) transformed to (-1,0,1) for analysis. \( b_1 \) denotes the block component, \( \beta \) are the regression coefficients, and \( \epsilon \) is the random error term, assumed to be independently and identically distributed with mean zero and variance \( \sigma^2 \). 	
	Combining the aliasing information from the \(\text{B}^2\text{-ACNP}\) and the regression analysis, a model with \( R^2 = 0.816 \) was obtained below:
	\[
	\begin{aligned}
	\hat{y} &= 1.7075 - 0.0424x_1 - 0.1254x_2 - 0.0628x_3 - 1.1298x_4 - 0.2747x_5 + 0.4258x_4^2 + 0.2423x_1x_4 - 0.1600b_1.
	\end{aligned}
	\]
	

\section{Conclusion}\label{sec6}

Based on the orthogonal component system, this paper proposes a blocked aliased component-number pattern \(\text{B}^2\text{-ACNP}\) to investigate the confounding properties among component effects in $s$-level designs with multi-block variables.  Using the method of complementary sets, we prove the confounding properties of low-order components, and obtain the necessary condition for \(\text{B}^2\text{-GMC}\) designs. The explicit expressions of \(^{\#}{}_{1}^mC_2^{(k)},
^{\#}{}_{2}^b C_2^{(k)}, ^{\#}{}_{2}^\phi C_2^{(k)}\) were provided for  a type of $s^{n-m}: s^p$ MBV designs with $D_t=S_{qv}$ and $D_b\subset H_v$. To compute all elements in the \(\text{B}^2\text{-ACNP}\), a blocked wordlength distribution matrix is introduced to characterize aliasing among components within alias sets.  In addition, the classification patterns of MA, CE, and MEC criteria can be uniformly expressed by both the B-WDM  and the specific elements of the \(\text{B}^2\text{-ACNP}\). This indicates that the \(\text{B}^2\text{-GMC}\) criterion makes more comprehensive use of the inherent aliasing information in MBV designs than existing criteria with a single block. Theoretical proofs and case studies further demonstrate that the \(\text{B}^2\text{-ACNP}\) provides a more refined aliasing description than the B-ACNP and is more suitable for MBV designs than the \(\text{B}^1\text{-ACNP}\) under $Kind~2$. Moreover, we provide a Python implementation of the algorithm for computing the complete \(\text{B}^2\text{-ACNP}\) and performing visual analysis. However, some open problems remain unsolved. Future research will further focus on the systematic construction of \(s\)-level \(\text{B}^2\text{-GMC}\) designs.

	\section*{Conflict of interest}
	The authors declare that they have no conflict of interest.
	
	
\section*{Funding}
The work was supported by the National Natural Science Foundation of China (12561047),  the Xinjiang Talent Development Fund (XJRC-2025-KJ-PY-KJLJ-108), and the 2025 Central Guidance for Local Science and Technology Development Fund (ZYYD2025ZY20).

\bibliography{reference}

\section*{Appendix A: Python code for calculating the complete B$^2$-ACNP.}

\begin{python}
from collections import defaultdict
import itertools
	
def defining_contrast_subgroup(word, s, v):
    m = len(word)
    all_vectors = [list(x) for x in itertools.product(range(s), repeat=m)]
    if v == 1:
        all_vectors = [vec for vec in all_vectors if sum(1 for e in vec if e != 0) != 1]
    elif v == 2:
        all_vectors = [vec for vec in all_vectors if sum(1 for e in vec if e != 0) == 2]
    o_3 = []
    for coeffs in all_vectors:
        combination = [0] * len(word[0])
        for i in range(m):
            scaled = [coeffs[i] * elem for elem in word[i]]
            combination = [(a + b) 
        o_3.append(combination)
    generate_word = []
    for vec in o_3:
        first_non_zero = next((x for x in vec if x != 0), None)
        if first_non_zero is not None and first_non_zero != 1:
            continue
        generate_word.append(vec)
    return generate_word + word

def block_effect(block, G, s, v=2):
    block1 = defining_contrast_subgroup(block, s, v)
    unique_vectors = []
    seen = set()
    for vec in block1:
        vec_tuple = tuple(vec)
        if vec_tuple not in seen:
            seen.add(vec_tuple)
            unique_vectors.append(vec)
    block1 = unique_vectors
    block2 = []
    for b_vec in block1:
        for g_vec in G:
            combination = [(g_vec[k] + b_vec[k]) 
            block2.append(combination)
    return block2

def generate_yates(n, m, s):
    vectors = []
    q = n - m
    for i in range(s ** q):
        digits = []
        temp = i
        for _ in range(q):
            temp, rem = divmod(temp, s)
            digits.append(rem)
        first_non_zero = next((elem for elem in digits if elem != 0), None)
        if first_non_zero is not None and first_non_zero != 1:
            continue
        vectors.append(digits + [0] * m)
    return vectors

def compute_b2_acnp(d_matrix, block, m, s):
    n = len(d_matrix[0])
    beffect = block_effect(block, d_matrix, s, v=2)
    yates = generate_yates(n, m, s)
    g_class = []
    b_class = []
    m_class = []
    phi_class = []
    block_effect_set = set(tuple(vec) for vec in beffect)
    design_half = d_matrix[1:]
    extend_design = d_matrix.copy()
    for mult in range(2, s):
        for row in design_half:
            multiplied_row = [(mult * elem) 
            extend_design.append(multiplied_row)
    for i, effect in enumerate(yates):
        new_design = [[(x + y) 
        t = [sum(1 for elem in row if elem != 0) for row in new_design]
        length_of_subgroup = len(d_matrix)
        if i == 0:
            g_class.append(t[:length_of_subgroup])
        else:
            contains_block = any(tuple(row) in block_effect_set for row in new_design[:length_of_subgroup])
            if contains_block:
                b_class.append(t)
            elif 1 in t:
                m_class.append(t)
            else:
                phi_class.append(t)
    def compute_class_aenp(t_list, aenp_dict):
        for t in t_list:
            for i in range(n + 1):
                p_i = t.count(i)
                if p_i == 0:
                    continue
                for j in range(n + 1):
                    q_j = t.count(j)
                    k = q_j - 1 if i == j else q_j
                    aenp_dict[(i, j, k)] += p_i
    g_aenp = defaultdict(int)
    b_aenp = defaultdict(int)
    m_aenp = defaultdict(int)
    phi_aenp = defaultdict(int)
    compute_class_aenp(g_class, g_aenp)
    compute_class_aenp(b_class, b_aenp)
    compute_class_aenp(m_class, m_aenp)
    compute_class_aenp(phi_class, phi_aenp)
    return {
    	'g_aenp': dict(g_aenp),
    	'b_aenp': dict(b_aenp),
    	'm_aenp': dict(m_aenp),
    	'phi_aenp': dict(phi_aenp)
    }
	
def print_b2_acnp(acnp_results):
    def print_single_aenp(aenp, class_name):
        print(f"\n{class_name}-class B2-ACNP:")
        print("=" * 50)
        all_i_values = sorted(set(key[0] for key in aenp.keys()))
        all_j_values = sorted(set(key[1] for key in aenp.keys()))
        for i in all_i_values:
            for j in all_j_values:
                relevant_keys = [(i_, j_, k_) for (i_, j_, k_) in aenp.keys() if i_ == i and j_ == j]
                if not relevant_keys:
                    continue
                non_zero_keys = [k for k in relevant_keys if aenp[k] != 0]
                if not non_zero_keys:
                    continue
                max_k = max(k[2] for k in non_zero_keys)
                result_list = [aenp.get((i, j, k), 0) for k in range(max_k + 1)]
                print(f"#{i}C{j}: {result_list}")
    print_single_aenp(acnp_results['g_aenp'], 'g')
    print_single_aenp(acnp_results['b_aenp'], 'b')
    print_single_aenp(acnp_results['m_aenp'], 'm')
    print_single_aenp(acnp_results['phi_aenp'], 'phi')
\end{python}


\end{document}